	\newtheorem{prob}{Problem}
	\newtheorem{corollary}{Corollary}
	\newtheorem{definition}{Definition}
	\newtheorem{lemma}{Lemma}
	\newtheorem{theorem}{Theorem}
\def\BibTeX{{\rm B\kern-.05em{\sc i\kern-.025em b}\kern-.08em
    T\kern-.1667em\lower.7ex\hbox{E}\kern-.125emX}}
\begin{document}

\title{Minimizing the Age of Information of Cognitive Radio-Based IoT Systems Under A Collision Constraint\\
}

\author{\IEEEauthorblockN{Qian Wang, He Chen, Yifan Gu, Yonghui Li, Branka Vucetic}
\thanks{Q.Wang is with  School of Electrical
	and Information Engineering, The University of Sydney, Sydney, NSW
	2006, Australia and Department of Information Engineering, The Chinese University of Hong Kong, Hong Kong SAR, China. The work is done when she is a visiting student at CUHK (email:qian.wang2@sydney.edu.au).}%
\thanks{H. Chen is with Department of Information Engineering, The Chinese University of Hong Kong, Hong Kong SAR, China (email: he.chen@ie.cuhk.edu.hk).}%
\thanks{Y. Gu, Y. Li and B. Vucetic are with School of Electrical
	and Information Engineering, The University of Sydney, Sydney, NSW
	2006, Australia (email: \{yifan.gu, yonghui.
		li,branka.vucetic\}@sydney.edu.au).}%

}


\maketitle

\begin{abstract}
This paper considers a cognitive radio-based IoT monitoring system, consisting of an IoT device that aims to update its measurement to a destination using cognitive radio technique. Specifically, the IoT device as a secondary user (SIoT), seeks and exploits the spectrum opportunities of the licensed band vacated by its primary user (PU) to deliver status updates without causing visible effects to the licensed operation. In this context, the SIoT should carefully make use of the licensed band and schedule when to transmit to maintain the timeliness of the status update. The timeliness of the status update characterizes how the destination knows the latest information of the SIoT. We adopt a recent metric, \textit{Age of Information} (AoI), to characterize the timeliness of the status update of the SIoT. We aim to minimize the long-term average AoI of the SIoT while satisfying the collision constraint imposed by the PU by formulating a constrained Markov decision process (CMDP) problem. We first prove the existence of optimal stationary policy of the CMDP problem. The optimal stationary policy (termed age-optimal policy) is shown to be a randomized simple policy that randomizes between two deterministic policies with a fixed probability. We prove that the two deterministic policies have a threshold structure and further derive the closed-form expression of average AoI and collision probability for the deterministic threshold-structured policy by conducting Markov Chain analysis. The analytical expression offers an efficient way to calculate the threshold and randomization probability to form the age-optimal policy. For comparison, we also consider the throughput maximization policy (termed throughput-optimal policy) and analyze the average AoI performance under the throughput-optimal policy in the considered system. Numerical simulations show the superiority of the derived age-optimal policy over the throughput-optimal policy. We also unveil the impacts of various system parameters on the corresponding optimal policy and the resultant average AoI.
\end{abstract}
\begin{IEEEkeywords}
Internet of Thing (IoT), Age of Information (AoI), cognitive radio, MDP
\end{IEEEkeywords}
%
\section{Introduction}
Recent years have witnessed extensive interests in the Internet of Things (IoT) from both academia and industry\cite{perera2015emerging}. Compared to wired solutions, wireless IoT networks have various advantages, such as low cost, easy deployment and maintenance, and support of mobility \cite{khan2017cognitive}. However, spectrum scarcity and low utilization of licensed wireless spectrum remain the major challenges for large scale IoT deployment due to a large number of interconnected IoT devices and the mutual interference among these devices. To address these challenges, cognitive radio (CR) represents a promising solution. In CR systems, a secondary user (SU) is allowed to access the spectrum licensed by the primary user (PU) with invisible effect to the PU. Incorporating the CR technique into IoT networks helps to effectively improve the spectrum utilization efficiency and alleviate the interference, the design and optimization of CR-based IoT (CR-IoT) networks have attracted considerable interests recently, see \cite{khan2017cognitive,khan2016cognitive} and references therein.

Most existing works focus on the delay minimization (e.g.,\cite{chen2011delay,wang2010delay}) and throughput maximization (e.g.,\cite{urgaonkar2009opportunistic,huang2009optimization}) of the SU(s) subject to the minimum performance requirements of PU of the licensed frequency band. The tradeoff between queuing delay performance of the SU and interference from the SU to the PU was studied in \cite{chen2011delay,wang2010delay}. A threshold policy was introduced in \cite{chen2011delay} to minimize the average delay while satisfying a collision constraint. The performance of the threshold policy was proved to be close to that of the optimal policy. The authors in \cite{wang2010delay} analyzed the average queuing delay and interference for CR networks with multiple SUs and multiple PUs, where SUs contend the PU channels by using random access scheme. \cite{urgaonkar2009opportunistic} focused on a system with multiple parallel channels that are orthogonal in either frequency or space domain with licensed PUs. Throughput-optimal policies were developed by considering general interference and mobility of SUs by resorting to the Lyapunov Optimization technique \cite{urgaonkar2009opportunistic}. \cite{huang2009optimization} studied the CR network with one licensed band and proposed a threshold strategy that achieves close-to-optimal throughput performance.

Many emerging applications of the CR-IoT, such as smart city, smart building, health-monitoring, environment monitoring and so on\cite{perera2015emerging,khan2017cognitive,khan2016cognitive} require timely status update delivery. In these applications, the timeliness of the information could be significant. For example, in smart transportation, the sensors continuously measure and update the location information of the public transportation to users. When planning a trip, people are interested in the latest information of bus or train location, which indicates the importance of the timeliness of status update. A natural question that arises is how to characterize the timeliness of status update in CR-IoT since timely status update is fundamentally different from either delay minimization or throughput maximization \cite{kaul2012real}. In this context, a new performance metric, termed \textit{Age of Information} (AoI), has been proposed in \cite{kaul2012real}. It is defined as the time elapsed since the generation time of the latest successfully received status update at the receiver. From the definition, AoI characterizes both the generation interval between successive status updates and the network latency of each status update, while delay and throughput are not capable to characterize the timeliness of the status update. 


There have been some initial efforts on investigating the AoI of CR networks \cite{gu2019minimizing,valehi2017maximizing,leng2019age}. The authors in \cite{gu2019minimizing} focused on analyzing and comparing the average AoI performance of underlay and overlay CR access strategies with perfect spectrum sensing. \cite{valehi2017maximizing} considered the interference-free interwave CR scheme and  proposed an optimal framing and scheduling policy that maximizes the system energy efficiency by optimizing the length of status update packets subject to a constraint that the expected AoI should be bounded below a certain limit. The authors in \cite{leng2019age} have studied AoI minimization for a SU in CR energy harvesting communications. Specifically, the SU can harvest and store energy, and aims to minimize its average AoI based on the availability of its energy and the primary spectrum. In most of these work\cite{gu2019minimizing,leng2019age}, the slotted transmission was considered and strict slot synchronization between the PU and the SU was assumed. The slotted transmission is common in IoT applications, e.g., wireless networked control systems \cite{park2017wireless}. However, it is challenging to achieve perfect slot synchronization between PU and SU in practice. Furthermore, some practical networks do not have a slotted structure and work in a continuous manner, such as Bluetooth and WLAN \cite{geirhofer2007cognitive}. 

Comparing to the system with slotted and synchronized PU and SU, PUs in unsynchronized systems are more likely to suffer from interference from the SU. This is because the primary operation service may return during the transmission of the SU, even if the channel is sensed idle correctly at the beginning of the SU's transmission in practical terms. One of the primary concerns of CR is to ensure that the licensed operations are not compromised. That is, only a limited collision can be tolerated by the primary system. Hence, for CR-IoT systems with an unsynchronized PU, the SU should pay a special attention to the protection of the PU. 

Motivated by the above gap, this paper attempts to minimize the average AoI of the IoT device as a secondary user (SIoT) while carefully taking the corresponding interference to the PU into account. Specifically, we consider an unsynchronized CR-IoT monitoring system, consisting of a PU evolves as a continuous-time Markov chain and a SIoT that aims to timely update its measurement to a destination working in a slotted manner by using CR technique. The SIoT adopts ``listen-before-talk'' strategy. That is, the SIoT first senses the channel, and then decides to sample and transmit its measurement only if the channel is sensed to be idle. The collision constraint imposed by the PU requires that the collision probability of PU's transmission should be less than a specified threshold. The collision constraint guarantees the performance of the PU (licensed user) of the frequency band and is commonly used for PU protection \cite{chen2011delay,huang2009optimization}. 
We aim to find an optimal policy (termed age-optimal policy, we use ``optimal policy'' and ``age-optimal policy'' interchangeably if not specified, hereafter) for the SIoT to minimize its average AoI subject to the collision constraint of PU by scheduling when to sample and transmit its status update based on the sensing result and the instantaneous AoI value.


Different from the existing AoI optimization in non-CR networks, the evolution of AoI of the SIoT systems in our model is affected by the PU. That is, the traffic of PU will influence the AoI of CR-IoT system due to the interference between the PU and the SIoT as well as the protection required for the PU. Specifically, if the channel is sensed busy, the SIoT cannot transmit its status update. Furthermore, the SIoT should decide whether to sample and transmit its status update to achieve better average AoI performance while satisfying the collision constraint of the PU. In contrast, most of the existing AoI optimization work considered status update between source and destination with an allocated frequency band. The AoI evolution in these work is only related to the channel condition and transmission policy. The coupled AoI evolution and primary operation service as well as the collision constraint imposed by the PU make it non-trivial to find the optimal policy of average AoI minimization for the SIoT. To the best of the authors' knowledge, the optimal policy for minimizing the AoI of the SIoT under a collision constraint imposed by the PU in CR-IoT systems with unsynchronized PU and SU has not been established.

\subsection{Contributions}
An optimal schedule policy to minimize the average AoI of the SIoT in a CR-IoT monitoring system subject to a collision constraint imposed by the PU has been investigated. The main contributions of this paper can be summarized as follows.
\begin{itemize}
	\item We formulate the optimal policy design problem for the average AoI minimization of the SIoT in CR-IoT networks as a constrained Markov decision process (CMDP) with a countable state space. The collision constraint of the PU in the system is regarded as the global constraint and the AoI is regarded as the objective reward.
	\item We prove the existence of the optimal stationary policy and show that the optimal policy is a randomized simple policy that randomizes at each state between two deterministic policies with a fixed probability. We adopt the Lagrangian primal-dual method to solve the formulated CMDP problem. By analyzing the subadditive property of the Lagrangian relaxed Markov decision process (MDP), the two deterministic policies from the optimal randomized simple policy are proved to have a threshold structure. That is, the SIoT will conduct transmission only if the AoI at its destination is above a threshold and the channel is sensed to be idle. We further provide a theoretical analysis of the threshold-based policy through Markov chain analysis and derive the closed-form optimal policy based on the theoretical analysis, which significantly reduces the computation complexity of finding the optimal policy compared to the traditional relative value iteration (RVI) method.
	\item We also analyze the average AoI performance of throughput-optimal policy (the policy that maximizes the throughput of the SIoT in CR-IoT system) as a benchmark and compare both policy structure and AoI performance of throughput-optimal policy with that of the derived age-optimal policy. Extensive simulations are conducted to validate our theoretical analysis and gain more insights. Numerical simulations show the superiority of the age-optimal policy over the throughput-optimal policy. We also unveil the impacts of various system parameters on the corresponding optimal policy and the resultant average AoI. It is shown that as the PU's activity frequency increases (smaller average idle state length and busy state length in each busy-idle cycle), the optimal average AoI of the SIoT first drops significantly, and then increases slowly, given the collision constraint of the PU, when the average idle probability of the PU is fixed. The existence of the tradeoff between the PU's activity frequency and the average AoI performance of the SIoT provides an important guideline for the SIoT to choose the appropriate frequency band based on the PU activity in multiple SIoT and multiple PU systems.
\end{itemize}

\subsection{Related Work}
A considerable amount of literature on AoI optimization has been published recently \cite{sun2017update,ceran2019average,wang2018skip,wang2019minimizing,zhou2019joint,gu2019timely,wang2019age,kadota2018optimizing,hsu2018age}. Most of these work aimed to minimize the long-term average AoI by exploring sampling and updating policies for systems with single \cite{sun2017update,ceran2019average,wang2018skip,wang2019minimizing,zhou2019joint,gu2019timely,wang2019age} and multiple sources\cite{kadota2018optimizing,hsu2018age,zhou2019joint}. The optimal status update policy has been explored in \cite{sun2017update}, for a single source with the \textit{generate-at-will} status update model. It was shown that the optimal update policy outperforms the zero-wait policy in many scenarios. In \cite{ceran2019average}, the optimal policy to minimize the average AoI was studied under an average transmission probability constraint of both automatic repeat request (ARQ) and hybrid ARQ systems. Specifically, a constrained MDP problem was formulated and the structure of the optimal policy on when to conduct transmission or retransmission was derived in \cite{ceran2019average}. A single source system with the randomly generated status update was considered in \cite{wang2018skip,wang2019minimizing,gu2019timely,wang2019age}. The optimal scheduling policy to decide whether to skip the new generated status update or switch to it was studied \cite{wang2018skip}. The average AoI minimization was formulated as an MDP problem and the optimal policy was proved to be a renewal policy. \cite{wang2019minimizing} considered the system with randomly generated status updates subject to the average transmission power constraint and showed how the status generation probability influences the optimal transmission policy as well as the resultant long-term average AoI. A Truncated Automatic Repeat reQuest (TARQ) scheme was considered in \cite{gu2019timely} and the maximum allowable retransmission times was optimized by analyzing the inherent age-energy tradeoff. In \cite{wang2019age}, the packet blocklength for each status update was optimized to minimize the average AoI for different packet management schemes including non-preemption, preemption and retransmission schemes.

For multiple-source scenarios, \cite{kadota2018optimizing,hsu2018age} considered status update models with \textit{generate-at-will} and Bernoulli generation model, respectively. Compared with the single-source system, the computation of optimal scheduling policy for multiple-source systems has higher complexity due to larger dimensional state space. Hence, the complexity is another concern in policy design. The authors in \cite{kadota2018optimizing} derived the lower bound for the optimal average AoI and developed a randomized policy, a Max-Weight policy, and a Whittle’s Index policy. Whittle Index scheduling has also been applied in \cite{hsu2018age}, where the AoI optimization for multiple sources was proved to be indexable. In \cite{zhou2019joint}, both single source system and multiple source system were considered and the sampling and updating to minimize the average AoI for systems was jointly designed under an energy constraint by resorting to the MDP tool.

In most of the aforementioned work applying MDP, the existence of the structural result of the optimal policy has been proved by leveraging the special property of the problem. However, to calculate the optimal policy, RVI is always needed. But the major challenges faced by RVI is the curse of dimension in real applications. To tackle this issue, a reinforcement learning method has been adopted in \cite{zhou2019joint}. For the case of MDP with a countable state space, state truncation is usually needed. In this paper, we derived the closed-form of the optimal policy, and thus overcome the drawback of the RVI when calculating the optimal policy.

\section{System Model and Problem formulation}
We consider a CR-IoT monitoring system consisting of one SIoT that timely delivers status updates to a destination. The transmission of the status update is augmented by the spectrum sensing technique. Specifically, the SIoT is considered to exploit the spectrum opportunities of a certain spectrum band, assigned to a PU. In order to ensure the protection requirement of the PU, the SIoT needs to seek the spectrum opportunities vacated by the PU. In the following, we will present PU and SIoT model in the considered system as well as the requirement for PU protection. 
\begin{figure}[!h]
   \centerline{\includegraphics[width=0.6\textwidth]{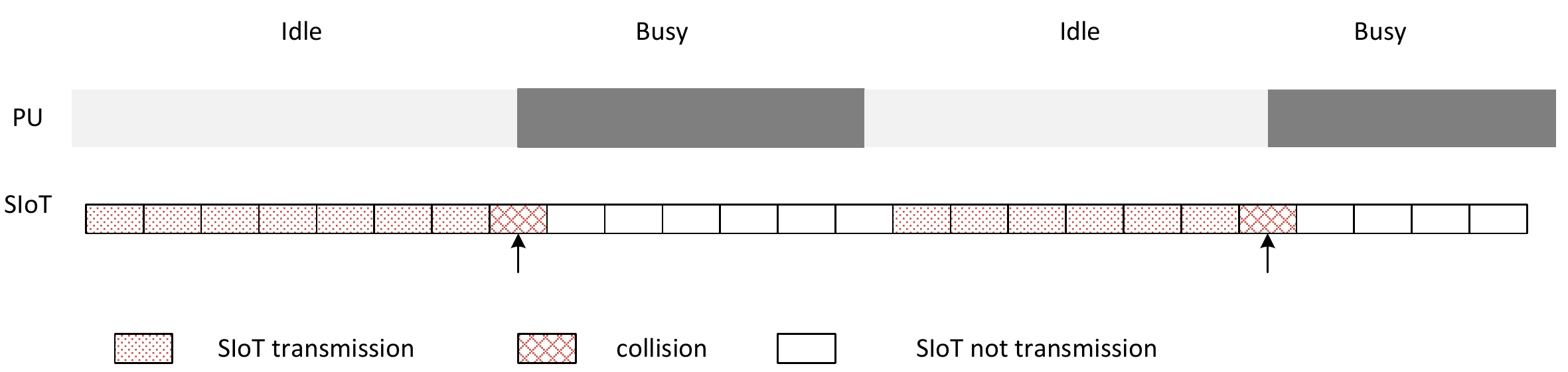}}
	\caption{PU and SIoT model illustration. }
	\label{fig0}
\end{figure}
\subsection{PU Model}
The occupancy of the channel by the PU is modeled as a two-state homogeneous continuous-time Markov chain, i.e., busy and idle states. The sojourn time for the idle state $t_I$ and that of the busy state $t_B$ are exponentially distributed with $\mathbb{E}[t_I]=\alpha^{-1}$ and $\mathbb{E}[t_B]=\beta^{-1}$, respectively. The long-term idle probability is $p_{I}=\beta/(\alpha+\beta)$. We emphasize that the PU activities are not slotted and PU has the right to access the channel at any time. A typical example is WLAN \cite{geirhofer2007cognitive}. Moreover, the assumption of exponentially distributed channel access process has been widely accepted in communication systems see, e.g., \cite{valehi2017maximizing,zhao2008opportunistic,li2011optimal}.

We also assume that the idle period of the PU is larger than the busy period, i.e, $\beta > \alpha$. This reflects the common nature of communication systems with lower spectrum utilization \cite{ghasemi2008spectrum}. As such, a spectrum access technique is demanded to overcome this shortage.


\subsection{SIoT model}

We consider a time-slotted system for the SIoT, which monitors an underlying time-varying physical process. The SIoT adopts ``listen-before-talk'' spectrum sensing scheme, and each slot consists of sensing and transmission periods. In each slot, the SIoT first senses the channel at the beginning of the time slot; then decides whether to sample and deliver the status update based on the sensing result (i.e., if the channel is sensed busy, the SIoT should be inactive, and if the spectrum is sensed idle, the SIoT will decide whether to transmit the status update or not, based on the transmission policy); and finally receives the acknowledgment from the destination for successful transmission. Different from the conventional spectrum sensing system, short packet communications is prevailing in CR-IoT systems rather than traditional long packet transmission \cite{zhang2018spectrum}. As such, the busy period of the PU could be longer than the duration of a time slot, i.e., ${\mathbb{E}}[t_B] > 1$. In addition, this phenomena makes it more flexible to design an access strategy for the SIoT. Owning to its continuous channel access process, the PU can randomly access the channel and may return in the middle of the slot during the transmission of the SIoT, even if the channel is sensed idle at the beginning of the slot. If the PU returns in the middle of the slot when the SIoT is transmitting the status update, a collision occurs and the transmission of the SIoT fails as well, as illustrated in Fig. \ref{fig0}. If the PU remains idle during the transmission of the SIoT, the SIoT's outage probability of transmission is $\phi_s$. Overall, there are two cases for transmission failure: 1) the PU returns in the middle of the slot while the SIoT is transmitting, and 2) the SIoT suffers outage during its transmission even if the PU is idle.

In this work, we assume the perfect sensing for the SIoT, with negligible small sensing time and perfect sensing outcome\footnote{The optimal policy for the scenario with imperfect sensing of SIoT has been left as a future work.}, i.e., zero false alarm rate and $100\%$ correct detection probability as in \cite{chen2011delay,gu2019minimizing}. However, as the PU might return in the middle of the time slot, the perfect spectrum sensing cannot ensure the protection of the PU transmission. In the next section, we provide the requirement for the PU protection.

\subsection{PU protection requirement}

In the considered system, there is at most one collision in each busy-idle cycle, according to the aforementioned assumptions and the continuous-time Markov chain model of the PU. The collision happens at the end of the idle period of the PU (also the beginning of a new busy period), when the PU returns in the middle of the transmission slot of the SIoT. After this slot, the SIoT continues to sense the spectrum. As we consider the perfect sensing scheme, the spectrum will be sensed busy and the SIoT will not transmit to avoid any collision in the busy period. 

To characterize the impact of the collision and further protect the PU, we introduce collision probability from the PU's perspective. We denote $\psi_p$ as the long-term average collision probability of the PU, given by
\begin{equation}
\label{eq2}
\psi_p=\lim_{T\rightarrow \infty}\frac{N_c}{N_T},
\end{equation} where $N_c$ denotes the total number of collisions that happened during $T$ slots and $N_T$ is the number of busy-idle cycles. We have a PU collision probability constraint $\psi_p \leq \eta_p$. $\eta_p$ is prior to the SIoT for transmission policy design.

\subsection{Problem Formulation}
The AoI, denoted by $\Delta(t)$ measures the timeliness of the status updates from the perspective of the receiver, i.e., the time elapsed since the generation time of the most recently successfully received status update. Mathematically, the AoI $\Delta(t)$ at time $t$ is $t-u(t)$, where $u(t)$ denotes the generation time of the latest status update at time $t$. 
\begin{figure}[!h]
	\centerline{\includegraphics[width=0.45\textwidth]{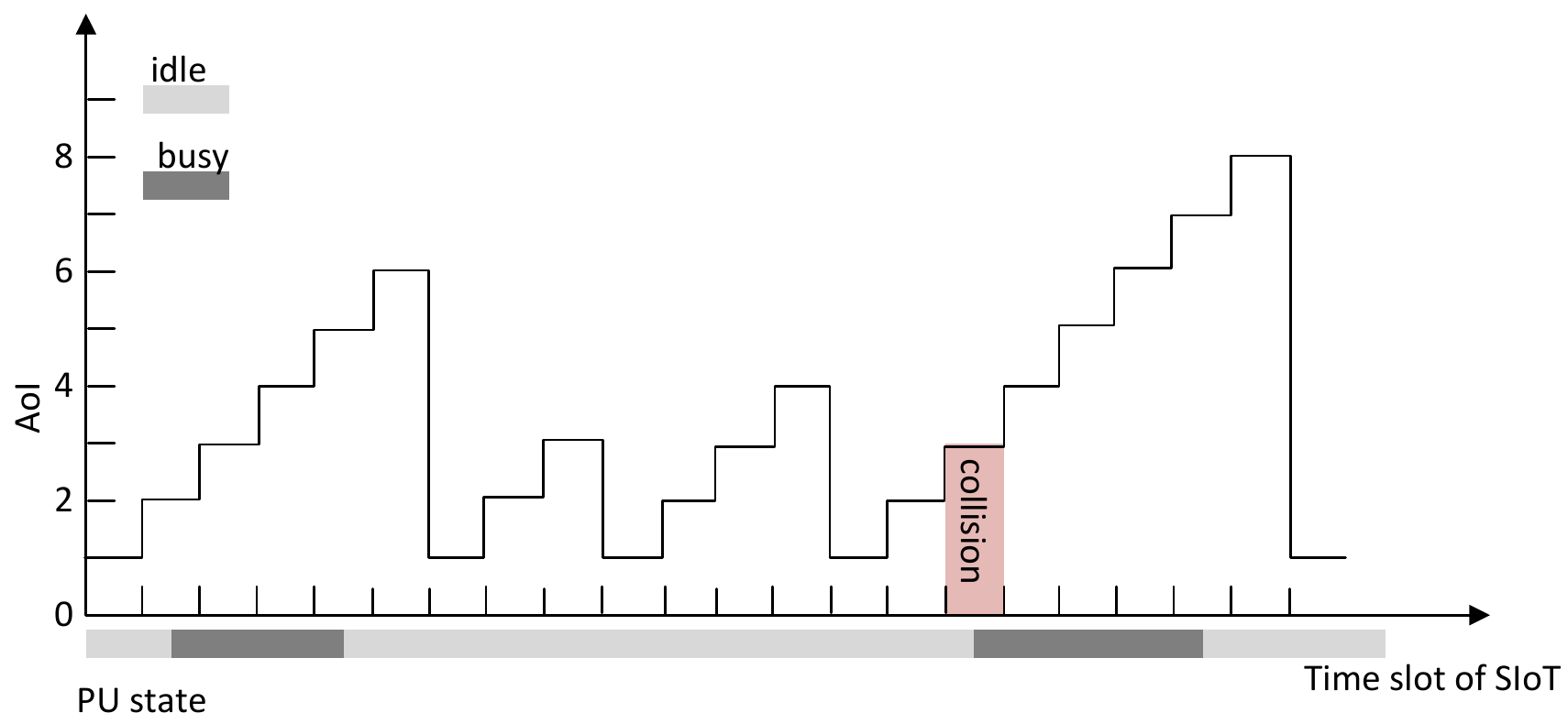}}
	\caption{The evolution of AoI for SIoT. }
	\label{figa}
\end{figure}

For the SIoT, to ensure timely status update, it should exploit transmission opportunities when the PU is idle. To have a better understanding, an example of the AoI for the SIoT is depicted in Fig. \ref{figa}. At the end of each slot, the AoI at the destination is reset to $1$ if the status update is successfully received. Otherwise, the AoI will increase by $1$. Considering the PU protection requirement, at the beginning of each slot, if the spectrum is sensed idle, the SIoT needs to decide whether to sample and transmit a new status update because the PU may return in the middle of the slot when the SIoT is transmitting the status update and suffer a collision as shown in Fig. \ref{figa}. If the spectrum is sensed busy, the SIoT should stay inactive to avoid collisions. 

We adopt the average AoI as the metric to characterize how timely the SIoT delivers its status update. The average AoI $\bar{\Delta}$ during time interval $(0,T)$ is defined as
\begin{equation}
\label{eq6}
\bar{\Delta}=\lim_{T \rightarrow \infty}\frac{\sum_{t=0}^{T}\Delta(t)dt}{T}.
\end{equation} In addition, the PU protection requirement should also be satisfied. Hence, an optimal policy is demanded to ensure timely status update of the SIoT while satisfying the collision probability constraint of the PU. 


Note that the collision probability constraint is enforced from the perspective of the PU, while the transmission policy is designed for the SIoT. The objects of the collision probability constraint and the transmission policy are inconsistent, which makes it non-trivial to find the optimal policy. Fortunately, by analyzing the PU traffic pattern and the periodic spectrum sensing model of the SIoT, we show that from the SIoT's perspective, the collision probability $\psi_s$ is given by
\begin{equation}
\label{eq2*}
\psi_s=\lim_{T\rightarrow \infty}\frac{N_c}{T}=\lim_{T\rightarrow \infty}\frac{N_c}{N_T \mathbb{E}[N_p]},
\end{equation} where $N_c$ denotes the total number of collisions happened during $T$ slots and the total number of status updates of the SIoT during $T$ slots is the same as the number of time slots $T$. $N_p$ denotes the length of a busy-idle cycle of the PU. According to the busy-idle pattern of the PU, when $T$ is approaching infinity, the second equality in \eqref{eq2*} holds.

By observing \eqref{eq2} and \eqref{eq2*}, we can find that 
\begin{equation}
\label{eq2**}
\psi_p=\psi_s \mathbb{E}[N_p]=\psi_s(1/\alpha+1/\beta).
\end{equation} 
Thanks to the relationship between the PU collision probability $\psi_p$ and the SIoT collision probability $\psi_s$, the PU collision probability constraint $\psi_p \leq \eta_p$ is the same as $\psi_s \leq \eta_s$, where $\eta_p=\eta_s\mathbb{E}[N_p]$. By this transformation, denoting $\pi$ as the transmission policy, our optimization problem can be formulated as follow
\begin{prob}
	\label{p1}
	\begin{equation}
	\begin{aligned}
	& \min_{\pi} \bar{\Delta}(\pi),\\
	& {\rm{s.t.}} \   \psi_s(\pi) \leq \eta_s.
	\end{aligned}
	\end{equation}
\end{prob}

\section{CMDP framework for optimal policy}

In this section, the optimal policy for the SIoT will be investigated under the MDP framework to minimize the average AoI of the SIoT subject to the collision probability constraint $\psi_s \leq \eta_s$. We assume that the SIoT has perfect knowledge of $\eta_s$, and PU traffic pattern, i.e., $\alpha$ and $\beta$. 

As we consider the slotted SIoT system and continuous alternating busy-idle activity pattern of the PU, we have the following transition matrix for channel state between adjacent slots \cite{papoulis2002probability},
\begin{equation}
\begin{aligned}
\label{eq1}
\Sigma=
\begin{pmatrix}
p_{II}& p_{IB}\\
p_{BI}& p_{BB}
\end{pmatrix}=\frac{1}{\alpha+\beta}\begin{pmatrix}
\beta+\alpha e^{-(\alpha+\beta)}&\alpha-\alpha e^{-(\alpha+\beta)}\\
\beta-\beta e^{-(\alpha+\beta)}& \alpha+\beta e^{-(\alpha+\beta)}
\end{pmatrix},
\end{aligned}
\end{equation} where $p_{ij}$ is the probability of channel state transiting from $i$ to $j$, $i$,$j$$\in\{I,B\}$ with $I$ and $B$ indicate idle and busy, respectively. 

\subsection{CMDP Formulation}
We are ready to formulate the problem of minimizing the average AoI of the SIoT subject to the collision constraint as a constrained MDP (CMDP) problem. An optimal policy can be obtained by using the effective tool of MDP. More specifically, the CMDP problem can be described by a tuple $\{\mathcal{S},\mathcal{A},P,r,d\}$, where

 \begin{itemize}
	\item State space $\mathcal{S}$: The state in each slot is represented by AoI and channel state (sensing result), denoted by $s_t\triangleq (\delta_t,u_t)$, where $\delta_t\in \{1,2,3,...\}$ and $u_t \in \{0,1\}$ with $0$ denoting idle and $1$ denoting busy. The state space is in two dimensions $\mathcal{S}\triangleq Z^+ \times 2$.
	\item Action space $\mathcal{A}$: If the channel is busy, then the SIoT will not transmit the status update. If the channel is idle, then the SIoT will make a decision on whether to transmit its status update or not based on its current AoI. The action at each slot $a_t \in \{0,1\}$ with $1$ denoting transmission and $0$ denoting not to transmit.
	\item Transition probabilities $P$: $P(s_{t+1}|s_t,a_t)$ is the probability of transit from state $s_t$ to $s_{t+1}$ when taking action $a_t$. Under the PU activity model and transmission model, we have
		\begin{equation}
		\label{e2}
		\begin{aligned}
		P(s_{t+1}|s_t,a_t)=P( \delta_{t+1},u_{t+1}|\delta_t,u_t,a_t)=P(\delta_{t+1}|\delta_t,u_{t+1},u_t,a_t)P(u_{t+1}|u_t).\\
		\end{aligned}
		\end{equation} To be more specific,
		\begin{equation}
		\label{e3}
		\begin{aligned}
		P(\delta_t+1,u_{t+1}|\delta_t,u_t,a_t=0)&=P(u_{t+1}|u_t),\\
		P(1,u_{t+1}=0|\delta_t,u_t=0,a_t=1)&=(1-\phi_s) e^{-\alpha},\\
		P(\delta_t+1,u_{t+1}=1|\delta_t,u_t=0,a_t=1)&=p_{IB},   \\
		P(\delta_t+1,u_{t+1}=0|\delta_t,u_t=0,a_t=1)&=p_{II}-(1-\phi_s) e^{-\alpha},
		\end{aligned}
		\end{equation} and otherwise, $P(s_{t+1}|s_t,a_t)=0$. $P(u_{t+1}|u_t)$ is given in \eqref{eq1}. $(1-\phi_s)e^{-\alpha}$ is the probability that the SIoT successfully transmits its status update. Specifically, $e^{-\alpha}$ is the probability that the channel remains idle during the transmission according to the PU model, and $1-\phi_s$ is the probability that SIoT suffers no outage.
	\item  $r: \mathcal{S} \times \mathcal{A}  \rightarrow R$ is the immediate reward based on the reward function of state-action pairs, defined by $r((\delta,u),a)=\delta$. The reward of each state-action pair is defined as the corresponding AoI of each state as we aim to minimize the average AoI.
	\item $d: \mathcal{S} \times \mathcal{A} \rightarrow R$ is the immediate cost of taking action $a$ in state $s$. Cost function of state-action pairs is $d((\delta,u),a)=a u+a(1-u)(1-e^{-\alpha})$. The cost function is defined as the average number of collision caused by action $a$ in state $(\delta,u)$. Specifically, if the action is not to sample and transmit the status, i.e., $a=0$, no collision will happen, i.e., $d((\delta,u),0)=0$. Otherwise, the average number of collisions will depend on the sensing result of channel. That is, if the sensing result is busy, i.e., $u=1$, the transmission of the SIoT will definitely lead to a collision and $d((\delta,1),1)=1$; if the sensing result is idle, i.e., $u=0$, whether the transmission of the SIoT will cause a collision to the PU depends on whether the PU will return during the transmission of the SIoT. The probability of PU keeping idle during the transmission of the SIoT is $\exp(-\alpha)$. Hence, $d((\delta,0),1)=1-\exp(-\alpha)$.
	
\end{itemize}
Given the initial state of the SIoT $s_0$, the infinite-horizon average reward of any feasible policy $\pi \in \Pi$ can be expressed as 
	\begin{equation}
	\label{e4}
	C(\pi|s_0)=\lim_{T \rightarrow \infty}\sup\frac{1}{T} \sum_{k=0}^{T}{\mathbb E}_\pi[r(s_k,a_k)|s_0].
	\end{equation} Define the infinite-horizon average cost with respect to
policy $\pi \in \Pi$ as 
	\begin{equation}
	\label{e5}
	D(\pi|s_0)=\lim_{T \rightarrow \infty}\sup\frac{1}{T} \sum_{k=0}^{T}{\mathbb E}_\pi[d(s_k,a_k)|s_0].
	\end{equation}
	
Here, ${\mathbb E}[\cdot]$ denotes the expectation with respect to dynamic PU activity and transmission outage of the SIoT under the policy $\pi$. Our objective is to find the optimal policy that minimizes the average AoI subject to the average collision probability constraint, which can be formulated as the following CMDP problem
\begin{prob}
	\label{p2}
		\begin{equation}
		\label{pro2}
		\begin{aligned}
		& \min_{\pi} C(\pi|s_0),\\
		& {\rm{s.t.}} \    D(\pi|s_0)\leq \eta_s.
		\end{aligned}
		\end{equation}
\end{prob} Considering that every time when the transmission is successful, the state of the SIoT will return to the state whose AoI is $1$ and the channel state is idle, i.e., $s=(1,0)$, we thus set $s_0=(1,0)$ as the initial state of the system.

\subsection{Optimal Policy}
Problem \ref{p2} in \eqref{pro2} is a CMDP problem with a countable but infinite state space and a finite action space. To solve 
Problem \ref{p2}, we first prove the existence of an optimal stationary policy. 

According to the definition of the collision probability, it is obvious that the CMDP problem here is feasible because we can always find a policy that satisfies the collision constraint by limiting the transmission of the SIoT. In addition, as the reward function is actually the AoI at each time slot and AoI at each slot either increases by $1$ or decreases to $1$, the reward function satisfies the following condition 
\begin{equation}
\label{e6}
\forall n >0 , {\rm the \ set} \ \{s \in S: \inf_a r(s,a) < n\}\ {\rm is \ finite}.
\end{equation} Then, from \cite[Theorem 11.7]{altman1999constrained}, Corollary \ref{c1} given below is straightforward.
\begin{corollary}
	\label{c1}
	There exists an optimal stationary policy for the CMDP given in Problem \ref{p2}.
\end{corollary}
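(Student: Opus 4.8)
The plan is to establish the corollary by verifying that Problem~\ref{p2} meets the hypotheses of \cite[Theorem 11.7]{altman1999constrained}, which asserts the existence of a stationary optimal policy for an average-reward CMDP on a countable state space. That theorem rests on three requirements: the constraint set must be non-empty (feasibility); the immediate reward must satisfy the inf-compactness/growth condition \eqref{e6}; and the per-state action set must be finite with a well-behaved transition law. My proof would confirm each requirement in turn and then appeal directly to the theorem.

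For feasibility, I would note that any policy which caps the transmission frequency produces a long-run collision cost $D(\pi|s_0)$ that can be made arbitrarily small; in particular, a policy that transmits only when the sensed channel is idle and the AoI exceeds a suitably large threshold keeps $D(\pi|s_0)\le\eta_s$ while still driving the chain back to $(1,0)$ infinitely often, so that both the constraint and a finite objective $C(\pi|s_0)$ are attainable. For the growth condition, the reward is $r((\delta,u),a)=\delta$, hence $\inf_a r(s,a)=\delta$ and the set $\{s\in\mathcal{S}:\delta<n\}=\{(\delta,u):\delta<n\}$ has exactly $2(n-1)$ elements, which is finite for every $n>0$; this is precisely \eqref{e6}. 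Finally, $\mathcal{A}=\{0,1\}$ is finite, and by \eqref{e2}--\eqref{e3} the transition kernel is a fixed stochastic matrix for each action, so no further continuity or compactness checks are required in this countable-state, finite-action setting.

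The main obstacle is not any single calculation but confirming that \eqref{e6} is exactly the growth hypothesis invoked in \cite[Theorem 11.7]{altman1999constrained}, rather than a strictly stronger Lyapunov or bounded-drift requirement. If the cited theorem additionally demands a recurrence condition, I would supply it using the renewal structure already present in the model: every successful transmission resets the state to $(1,0)$, so under any feasible policy that transmits with positive probability the chain hits the finite set $\{(1,0)\}$ infinitely often and the induced Markov chain is positive recurrent on a single communicating class. This recurrence, together with \eqref{e6} controlling the reward growth, yields the uniform integrability needed for the average limits in \eqref{e4}--\eqref{e5} to be well defined, after which Theorem 11.7 delivers the stationary optimal policy.
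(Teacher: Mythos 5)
Your proposal is correct and follows essentially the same route as the paper: establish feasibility of the constraint, verify that the reward $r((\delta,u),a)=\delta$ satisfies the growth condition \eqref{e6}, and invoke \cite[Theorem 11.7]{altman1999constrained}. The paper's own justification is terser (it asserts feasibility and \eqref{e6} and declares the corollary straightforward), so your added details --- the explicit count of $\{s:\delta<n\}$ and the recurrence argument via resets to $(1,0)$ --- only make the same argument more self-contained.
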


To solve the above CMDP problem in \ref{p2}, the Lagrangian primal-dual method can be applied \cite{sennott1993constrained}. To proceed, we define
\begin{equation}
\label{e7}
L_{\lambda}(\pi|s_0)=J_{\lambda}(\pi|s_0)-\lambda\eta_s,
\end{equation} where $\lambda>0$, $J_{\lambda}(\pi|s_0)=C(\pi|s_0)+\lambda D(\pi|s_0)$. $\pi_{\lambda}^*$ denotes the optimal policy for the Lagrangian relaxed unconstrained MDP of a given $\lambda$ that achieves both $\min_{\pi}L_{\lambda}(\pi|s_0)$ and $\min_{\pi}J_{\lambda}(\pi|s_0)$, where the average reward function $r_\lambda(s,a)=r(s,a)+\lambda d(s,a).$

We subsequently show the characteristics of the optimal stationary policy by having the following theorem.
\begin{theorem}
	\label{T1}
	If $D(\pi^{*}_{\lambda=0}|s_0) \leq \eta_s$, then there exists an optimal stationary deterministic policy for the CMDP given in Problem \ref{p2}. Otherwise, if there exists $\lambda^{*}>0$ that achieves $D(\pi_{\lambda^*}|s_0)=\eta_s$, then the optimal policy for Problem \ref{p2} is the optimal policy $\pi_{\lambda^*}$ for the unconstrained problem \ref{e7}; otherwise, there exists a randomized simple policy $(\mu, \pi^{*}_{\lambda_1}, \pi^{*}_{\lambda_2})$ that is optimal for Problem \ref{p2}. The randomized simple policy $(\mu, \pi^{*}_{\lambda_1}, \pi^{*}_{\lambda_2})$ randomizes at each state $s$, choosing $\pi^{*}_{\lambda_1}(s)$ with probability $\mu$ and $\pi^{*}_{\lambda_2}(s)$ with probability $1-\mu$, $\mu\in [0,1]$ and achieving the maximum allowable collision constraint $\eta_s$. The deterministic policies $\pi^{*}_{\lambda_1}$ and $\pi^{*}_{\lambda_2}$ differ at most one state and are optimal for their corresponding unconstrained problem given in \ref{e7}.
\end{theorem}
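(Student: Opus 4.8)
The plan is to attack Problem \ref{p2} through its Lagrangian dual and to read the structure of the primal optimizer off the dual optimizer. First I would dispose of the easy regime: if the penalty-free relaxation already respects the constraint, i.e. $D(\pi^{*}_{\lambda=0}|s_0)\leq \eta_s$, then $\pi^{*}_{\lambda=0}$ is primal-feasible and it minimizes $C(\pi|s_0)$ over \emph{all} policies (feasible or not), so it is a fortiori optimal for the constrained problem. Since $\pi^{*}_{\lambda=0}$ is the optimal policy of an unconstrained average-cost MDP satisfying the growth condition \eqref{e6}, it can be taken deterministic and stationary, which settles the first claim.

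For the remaining regime I would study the dual function $g(\lambda)\triangleq \min_{\pi} L_{\lambda}(\pi|s_0)=J_{\lambda}(\pi^{*}_{\lambda}|s_0)-\lambda\eta_s$ for $\lambda\geq 0$. Two structural facts drive the argument. First, $g$ is concave, being a pointwise infimum of the affine-in-$\lambda$ maps $\lambda\mapsto C(\pi|s_0)+\lambda\bigl(D(\pi|s_0)-\eta_s\bigr)$. Second, $D(\pi^{*}_{\lambda}|s_0)$ is non-increasing in $\lambda$, which I would make precise by the usual interchange argument comparing $\pi^{*}_{\lambda_1}$ and $\pi^{*}_{\lambda_2}$ on each other's objectives: heavier penalization of collisions can only lower the collision cost of the unconstrained minimizer. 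Combining the assumption $D(\pi^{*}_{\lambda=0}|s_0) > \eta_s$ with the fact that large $\lambda$ forces $D(\pi^{*}_{\lambda}|s_0)$ below $\eta_s$, monotonicity yields a critical multiplier $\lambda^{*}=\sup\{\lambda\geq 0: D(\pi^{*}_{\lambda}|s_0)\geq \eta_s\}$, which (by the envelope identity $g'(\lambda)=D(\pi^{*}_{\lambda}|s_0)-\eta_s$) is exactly the maximizer of $g$.

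At $\lambda^{*}$ there are two subcases. If some deterministic $\pi_{\lambda^{*}}$ attains $D(\pi_{\lambda^{*}}|s_0)=\eta_s$, then it is primal-feasible, meets the constraint with equality, and minimizes the Lagrangian; complementary slackness together with strong duality for average-cost CMDP (\cite{sennott1993constrained}, \cite{altman1999constrained}) shows it is primal-optimal, giving the second claim. Otherwise $D(\pi^{*}_{\lambda}|s_0)$ jumps across $\eta_s$ at $\lambda^{*}$: by continuity of the value $J_{\lambda}(\pi^{*}_{\lambda}|s_0)$ in $\lambda$, the left- and right-limit policies are both optimal at $\lambda^{*}$, providing two deterministic policies $\pi^{*}_{\lambda_1},\pi^{*}_{\lambda_2}$ (which I may take $\lambda_1,\lambda_2\to\lambda^{*}$) with $D(\pi^{*}_{\lambda_1}|s_0)\geq \eta_s\geq D(\pi^{*}_{\lambda_2}|s_0)$. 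I would then pick $\mu\in[0,1]$ so that $\mu D(\pi^{*}_{\lambda_1}|s_0)+(1-\mu)D(\pi^{*}_{\lambda_2}|s_0)=\eta_s$; since both policies share the value $g(\lambda^{*})$ and the reward and cost of the randomized simple policy are the corresponding $\mu$-mixtures, the resulting policy is feasible with equality and attains the dual optimum, hence is primal-optimal.

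The main obstacle, and the part I would treat most carefully, is the assertion that $\pi^{*}_{\lambda_1}$ and $\pi^{*}_{\lambda_2}$ may be taken to differ in \emph{at most one state}. I would establish this through the equivalent linear-programming / occupation-measure formulation of the average-cost CMDP: stationary policies correspond to state-action occupation measures lying in a polytope, and imposing the single collision constraint $D(\pi|s_0)\leq\eta_s$ intersects that polytope with one additional half-space. An optimal solution can always be taken at an extreme point of the constrained polytope, and such an extreme point is either an extreme point of the unconstrained polytope (a deterministic policy, recovering the second claim) or it lies on an edge joining two adjacent unconstrained extreme points. Because adjacent extreme points of the occupation-measure polytope correspond to deterministic policies that differ in exactly one state, the optimal randomized policy is a mixture of two such policies, yielding $(\mu,\pi^{*}_{\lambda_1},\pi^{*}_{\lambda_2})$ with the stated one-state property. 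Making the edge-adjacency precise for a countable state space, and matching the mixed pair to the $\lambda^{*}$-optimal policies from the dual analysis, is the delicate step, and I would lean on the extreme-point theory for constrained MDPs in \cite{altman1999constrained} to complete it.
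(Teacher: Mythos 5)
Your overall route---Lagrangian dual function, monotonicity of $D(\pi^{*}_{\lambda}|s_0)$ in $\lambda$ via the interchange argument, a critical multiplier $\lambda^{*}$, and a $\mu$-mixture of the two policies whose costs straddle $\eta_s$---is essentially a re-derivation of the Beutler--Ross/Sennott theorem that the paper simply \emph{cites}: the paper's entire proof consists of verifying the hypotheses of \cite[Theorem $2.5$]{sennott1993constrained} for this particular model. That verification is not a formality, and it is the main thing missing from your argument. Every nontrivial step you take (finiteness of $\lambda^{*}$, i.e.\ that large $\lambda$ forces $D(\pi^{*}_{\lambda}|s_0)$ below $\eta_s$; continuity of $J_{\lambda}(\pi^{*}_{\lambda}|s_0)$ in $\lambda$; existence of the left- and right-limit policies at $\lambda^{*}$, which in a countable state space requires a compactness/diagonal-subsequence extraction, not just ``continuity of the value''; validity of the average-reward optimality equation underlying your envelope identity) rests on exhibiting a stationary policy that induces an irreducible positive recurrent chain with \emph{finite} average AoI and collision cost \emph{strictly below} $\eta_s$. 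The paper supplies exactly this witness (the randomized transmit-with-probability-$p$ policy of Section V) together with the growth condition \eqref{e6} on the reward; without such a Slater-type policy of finite value, your dual function need not attain its maximum at a finite $\lambda^{*}$ and the limiting policies need not exist.

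The second genuine gap is the ``differ in at most one state'' step. Your edge-adjacency claim---that adjacent extreme points of the occupation-measure polytope are deterministic policies differing in exactly one state, so that a constrained extreme point on an edge inherits the one-state property---is not the standard fact and is doubtful as stated: the occupation measure is a nonlinear function of the per-state randomization probabilities, so the family of policies randomizing in a single state is not in general an edge of the polytope, and vertex adjacency does not translate into a one-state difference. The correct finite-state argument is a basis-counting one (at an extreme point of the polytope cut by one additional half-space, the number of positive entries exceeds the number of recurrent states by at most the number of side constraints, hence randomization occurs in at most one state), or alternatively the construction of a chain of $\lambda^{*}$-Lagrangian-optimal deterministic policies obtained by flipping one state at a time and locating the consecutive pair whose costs straddle $\eta_s$. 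Extending either argument to the countable state space here, and matching the resulting pair to your dual-limit policies, is precisely what \cite[Theorem $2.5$]{sennott1993constrained} packages; you would be better served verifying its assumptions, as the paper does, than re-deriving it with these steps left open.
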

 \begin{proof}
 	See Appendix \ref{A1}.
 \end{proof}

The following corollary can be inferred from Theorem \ref{T1}
\begin{corollary}
	\label{c2}
	There exists a constant $J_{\lambda}^{*}$, a bounded function $h_{\lambda}(\delta,u):\mathcal{S} \rightarrow R$ and a stationary and deterministic policy $\pi_{\lambda}^{*}$, that satisfies the average reward optimality equation,
	\begin{equation}
\label{e10}
J_{\lambda}^{*}+h_{\lambda}(\delta,u)=\min_{a\in A((\delta,u))} (r_\lambda ((\delta,u),a)+{\mathbb{E}}[h_{\lambda}(\hat{\delta},\hat{u})]).
\end{equation}
	 $\forall (\delta,u) \in S$, where $\pi_{\lambda}^{*}$ is the optimal policy, $J_{\lambda}^{*}$ is the optimal average reward, and $(\hat{\delta},\hat{u})$ is the next state after $(\delta,u)$ taking action $a$.
\end{corollary}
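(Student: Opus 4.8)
The plan is to establish the average-reward optimality equation \eqref{e10} for the Lagrangian-relaxed unconstrained MDP (per-stage reward $r_\lambda(s,a)=r(s,a)+\lambda d(s,a)$) via the vanishing-discount approach, using the existence of an optimal stationary policy supplied by Theorem \ref{T1} together with the growth condition \eqref{e6}. For each discount factor $\gamma\in(0,1)$ I would introduce the optimal discounted reward $V_{\gamma}(\delta,u)$ started from state $(\delta,u)$, which satisfies the discounted Bellman equation $V_{\gamma}(\delta,u)=\min_{a\in A((\delta,u))}(r_\lambda((\delta,u),a)+\gamma\,\mathbb{E}[V_{\gamma}(\hat\delta,\hat u)])$, fix the regeneration state $z=(1,0)$ (every successful update returns the chain to AoI $1$ with an idle channel), and study the relative value $h_{\gamma}(\delta,u)=V_{\gamma}(\delta,u)-V_{\gamma}(z)$. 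The objective is to verify the three conditions that justify letting $\gamma\uparrow 1$: finiteness of $V_{\gamma}$, and uniform-in-$\gamma$ lower and upper bounds on $h_{\gamma}$. The limit then yields the constant $J_{\lambda}^{*}=\lim_{\gamma\uparrow 1}(1-\gamma)V_{\gamma}(z)$, a bounded limiting function $h_{\lambda}$, and a deterministic stationary minimizer $\pi_{\lambda}^{*}$ that solves \eqref{e10}.

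I would carry out the verification in three steps. First, $V_{\gamma}(\delta,u)<\infty$ follows because the policy that transmits whenever the channel is sensed idle drives the state to $z$ with a geometrically bounded first-passage time, so the linearly growing reward remains summable under discounting. Second, the lower bound is $h_{\gamma}(\delta,u)\ge 0$: since $z=(1,0)$ carries the minimal per-stage reward and the richest action set (an idle channel admits both transmitting and idling, whereas a busy channel forbids transmission), a monotonicity argument in the AoI $\delta$ and in the channel state $u$ gives $V_{\gamma}(z)\le V_{\gamma}(\delta,u)$ for every state. Third, for the upper bound I would fix the same policy $\pi$ and its first-passage time $T$ from $(\delta,u)$ to $z$, and iterate the Bellman inequality $V_{\gamma}(s)\le r_\lambda(s,\pi(s))+\gamma\,\mathbb{E}[V_{\gamma}(\hat{s})]$ along the $\pi$-trajectory until its first visit to $z$; using $\gamma^{T}\le 1$ and $V_{\gamma}(z)\ge 0$, this collapses to $h_{\gamma}(\delta,u)\le \mathbb{E}_{\pi}[\sum_{k=0}^{T-1} r_\lambda(s_k,a_k)]=:M(\delta,u)$, a bound that is independent of $\gamma$.

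It then remains to confirm that $M$ is finite and summable against the transition kernel. Under $\pi$ the first passage to $z$ is precisely the first successful update, which occurs on each idle slot with probability at least $(1-\phi_s)e^{-\alpha}>0$; combined with the exponentially mixing busy-idle dynamics encoded in $\Sigma$ of \eqref{eq1}, this gives $T$ a finite second moment. Because the AoI increments by at most one per slot before the reset, $M(\delta,u)\le \delta\,\mathbb{E}[T]+\tfrac12\mathbb{E}[T(T-1)]+\lambda\,\mathbb{E}[T]$ is finite and at most linear in $\delta$, and since the successor AoI obeys $\hat\delta\le\delta+1$ we obtain $\mathbb{E}[M(\hat\delta,\hat u)]<\infty$ under every action. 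With finiteness, the uniform lower bound $h_{\gamma}\ge 0$, and the uniform summable upper bound $h_{\gamma}\le M$ established, the standard vanishing-discount theorem for countable-state MDPs with unbounded rewards produces a convergent subsequence of $h_{\gamma}$ whose limit $h_{\lambda}$, together with $J_{\lambda}^{*}$ and the attaining deterministic policy $\pi_{\lambda}^{*}$, satisfies \eqref{e10}.

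I expect the third step, namely constructing the dominating function $M$ and verifying its summability, to be the main obstacle, because this is where the problem-specific renewal structure (the guaranteed resets through $(1-\phi_s)e^{-\alpha}>0$ and the exponential tails of the PU busy-idle cycle) must be exploited to turn the formal vanishing-discount machinery into rigorous uniform bounds; the finiteness and lower-bound steps are comparatively routine.
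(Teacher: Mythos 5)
Your proposal is correct in substance and follows the same underlying machinery as the paper, which offers no self-contained argument for this corollary: it simply infers it from Theorem \ref{T1}, whose proof (Appendix A) verifies Sennott's assumptions by exhibiting a stable stationary policy with finite average reward and cost. Those assumptions are precisely the three vanishing-discount conditions you verify explicitly --- finiteness of $V_\gamma$, a uniform lower bound on $h_\gamma$ via the regeneration state $(1,0)$, and a $\gamma$-independent dominating function $M$ built from the first-passage reward under the ``always transmit when idle'' policy --- so your writeup essentially opens the black box that the paper leaves closed, and the problem-specific ingredients you identify (reset probability $(1-\phi_s)e^{-\alpha}>0$ on idle slots, exponential tails of the busy period, at-most-unit AoI growth per slot) are exactly what make Sennott's conditions hold here. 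Two caveats, neither fatal and both inherited from the paper's own statement: first, your limiting function satisfies $0\le h_\lambda(\delta,u)\le M(\delta,u)$ with $M$ growing linearly in $\delta$, so it is pointwise finite but not bounded on the countable state space; the word ``bounded'' in the corollary should be read in that weaker sense, and your opening sentence promising a ``bounded limiting function'' should be adjusted to match the bound you actually derive. Second, the standard vanishing-discount theorem under these conditions delivers the average-reward optimality \emph{inequality} (which already suffices to certify that the minimizing stationary deterministic policy is average-optimal); upgrading to the equality \eqref{e10} requires an extra continuity or equicontinuity step that you, like the paper, assert rather than prove --- worth a sentence acknowledging this if the equation itself (rather than the existence of an optimal deterministic stationary policy) is what you need downstream. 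For the monotonicity step in your lower bound, note that the required stochastic ordering of the channel follows from $p_{II}-p_{BI}=e^{-(\alpha+\beta)}>0$ in \eqref{eq1}, which you should state explicitly.
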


According to Corollary \ref{c2}, for a fixed $\lambda$, there exists an optimal deterministic policy for  $\min_{\pi}J_{\lambda}(\pi|s_0)$. Along with Theorem \ref{T1}, the CMDP problem can be solved in two steps: \textbf{1)} searching for the possible $\lambda$, \textbf{2)} solving the corresponding Lagrangian relaxed unconstrained MDP problem.

Furthermore, analyzing the unconstrained MDP policy makes it possible to characterize the optimal policy for the CMDP problem. In the subsequent subsection, we will analyze the structure of the optimal policy.

\subsection{Policy Structure}

In this subsection, we derive the structural results of the optimal policy for further analysis of the relationship between system parameters and the optimal policy. We first unveil the monotonicity of the optimal policy for the unconstrained MDP, in terms of the AoI $\delta$, in the following theorem 
%
%

\begin{theorem}
	\label{T2}
	The optimal policy of the unconstrained MDP that satisfies the equation in \eqref{e10} has a threshold structure, i.e.,
	\begin{equation}
	\label{eq15}
	\pi_{\lambda}^*(\delta,0)=\left\{
	\begin{array}{rcl}
	0,& &\text{if }  \delta < \Gamma\\
	1,& &\text{otherwise. }  \\
	\end{array}
	\right.
	\end{equation}
\end{theorem}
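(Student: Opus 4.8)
The plan is to reduce the threshold statement to a \emph{subadditivity} property of the state--action value function associated with \eqref{e10}, and to obtain that property from the \emph{monotonicity} of the relative value function $h_\lambda(\delta,u)$ in the age $\delta$. Concretely, for the idle channel state $u=0$ I would define $Q_\lambda((\delta,0),a)=r_\lambda((\delta,0),a)+{\mathbb E}[h_\lambda(\hat\delta,\hat u)]$ for $a\in\{0,1\}$, where the expectation uses the transition probabilities in \eqref{e3}. Using \eqref{e3} together with $r_\lambda((\delta,0),0)=\delta$ and $r_\lambda((\delta,0),1)=\delta+\lambda(1-e^{-\alpha})$, the first step is the routine cancellation of the common $p_{IB}$-weighted term that sends the chain to the busy state, which leaves
\begin{equation}
Q_\lambda((\delta,0),1)-Q_\lambda((\delta,0),0)=\lambda(1-e^{-\alpha})+(1-\phi_s)e^{-\alpha}[h_\lambda(1,0)-h_\lambda(\delta+1,0)].
\end{equation}
Since $(1-\phi_s)e^{-\alpha}\ge 0$ and $h_\lambda(1,0)$ is a constant, this action difference is non-increasing in $\delta$ precisely when $h_\lambda(\delta,0)$ is non-decreasing in $\delta$. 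Once transmission ($a=1$) becomes optimal at some age, the difference stays non-positive for all larger $\delta$, which is exactly the threshold structure claimed in \eqref{eq15}.

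The core of the argument is therefore to show that $h_\lambda(\delta,u)$ is non-decreasing in $\delta$ for each $u\in\{0,1\}$, and I would obtain this through the discounted-cost MDP. I would introduce a discount factor $\gamma$ and the value-iteration sequence $V_{n+1}(\delta,u)=\min_{a}\{r_\lambda((\delta,u),a)+\gamma\,{\mathbb E}[V_n(\hat\delta,\hat u)]\}$ started from $V_0\equiv 0$, and recall that by Corollary \ref{c2} the vanishing-discount limit yields a bounded $h_\lambda$ satisfying \eqref{e10}. The claim follows by induction on $n$: the base case is immediate, and for the inductive step I would use that in the busy state only $a=0$ is available, so $V_{n+1}(\delta,1)=\delta+\gamma[p_{BI}V_n(\delta+1,0)+p_{BB}V_n(\delta+1,1)]$ is non-decreasing by the induction hypothesis, while for the idle state both candidate terms are non-decreasing in $\delta$ (the $V_n(1,0)$ contribution is constant and every remaining term is evaluated at age $\delta+1$). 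Because the pointwise minimum of non-decreasing functions is again non-decreasing, monotonicity propagates through the iteration and then passes to $h_\lambda$ in the limit $\gamma\uparrow 1$.

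The step I expect to be the main obstacle is the justification of this vanishing-discount argument on the countably infinite state space: I must ensure that the monotonicity established along the value-iteration sequence survives the limit and that the minimizer in \eqref{e10} is attained. Here I would lean on Corollary \ref{c2}, which already supplies a bounded $h_\lambda$ and the validity of the average-reward optimality equation, together with the growth condition \eqref{e6} that keeps the relevant sublevel sets of the reward finite. A secondary point to verify is that the threshold is meaningful: since $(1-\phi_s)e^{-\alpha}>0$ and $h_\lambda(\delta,0)$ grows without bound in $\delta$, the action difference is eventually non-positive, so a finite $\Gamma$ exists; in the degenerate case where it never turns non-positive the policy reduces to $\Gamma=\infty$, which remains consistent with the form \eqref{eq15}. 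The remaining parts, namely the algebraic cancellation in the first paragraph and the inductive step, are routine once the monotonicity of $h_\lambda(\cdot,0)$ is in hand.
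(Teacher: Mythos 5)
Your proof is correct, but it follows a genuinely different route from the paper's. The paper does not touch the value function at all: it invokes Puterman's sufficient conditions for monotone optimal policies (Theorem 8.11.3), verifying that the one-step reward $r_\lambda(s,a)$ is nondecreasing and subadditive and that the tail transition probabilities $q(k|s,a)=\sum_{j\ge k}p(j|s,a)$ are nondecreasing and subadditive in $(\delta,a)$ for fixed $u=0$ --- a check carried out purely on the MDP primitives \eqref{e3} and the reward/cost definitions. You instead compute the action-value gap
\begin{equation}
Q_\lambda((\delta,0),1)-Q_\lambda((\delta,0),0)=\lambda(1-e^{-\alpha})+(1-\phi_s)e^{-\alpha}\bigl[h_\lambda(1,0)-h_\lambda(\delta+1,0)\bigr],
\end{equation}
which is algebraically correct (the $p_{IB}$ terms cancel and the $p_{II}$ terms cancel up to the $(1-\phi_s)e^{-\alpha}$ mass moved to state $(1,0)$), and reduce everything to the monotonicity of $h_\lambda(\cdot,0)$, which you obtain by induction along discounted value iteration followed by the vanishing-discount limit. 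The trade-off: the paper's route is shorter and avoids any limiting argument, but it applies a theorem stated for totally ordered state spaces to a two-dimensional state space by fixing $u$, which is slightly informal; your route requires the standard (Sennott-type) justification that monotonicity survives the limit $\gamma\uparrow 1$ on the countable state space --- which you correctly flag and which does go through, since $h_\lambda$ arises as a pointwise limit of differences $V_\gamma(\cdot)-V_\gamma(s_0)$ and pointwise limits preserve monotonicity --- but in exchange it yields the explicit expression for the action gap, from which the threshold $\Gamma$ can be read off directly. One small caution: your remark that $h_\lambda(\delta,0)$ ``grows without bound'' (used to argue $\Gamma<\infty$) contradicts the boundedness asserted in Corollary \ref{c2}; the unboundedness is in fact what holds for this AoI chain, but since the theorem only claims the threshold \emph{form}, admitting $\Gamma=\infty$ as you do is the safe way to state it.
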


\begin{proof}
	See Appendix \ref{A2}.
\end{proof}
 

Based on the above property and Theorem \ref{T1}, the two deterministic policies $\pi^{*}_{\lambda_1}$ and $\pi^{*}_{\lambda_2}$ that make up the optimal stationary policy $(\mu, \pi^{*}_{\lambda_1}, \pi^{*}_{\lambda_2})$, have a threshold structure. In addition, the difference between the value of the two thresholds is at most $1$, because$\pi^{*}_{\lambda_1}$ and $\pi^{*}_{\lambda_2}$ differ at most $1$ state according to Theorem \ref{T1}. That is, assuming the threshold of $\pi^{*}_{\lambda_1}$ is $\Gamma$, then, that of $\pi^{*}_{\lambda_2}$ is $\Gamma+1$. Moreover, given the collision constraint $\eta_s$, the average costs of these two policies have the property that $D(\pi^{*}_{\lambda_1}|s_0)\geq\eta_s\geq D(\pi^{*}_{\lambda_2}|s_0)$. The reason consists of two parts: first, the policy with a larger threshold is less likely to conduct a transmission, in turn, the collision probability under this policy is smaller; second, if the collision probabilities of both two policies are smaller than the collision constraint, then the collision probability of the randomized simple police will also be smaller than the collision constraint. Hence, to calculate the optimal policy, we need to calculate the threshold $\Gamma$ and the randomization probability $\mu$. Comparing with the traditional RVI method, directly calculating the parameters $\Gamma$ and $\mu$ can significantly reduce the offline computation complexity. Moreover, to conduct this policy, only these parameters rather than all the state-action pairs of the optimal policy policy are needed, which further reduces the storage of the SIoT. In the next section, detailed steps to calculate the optimal policy will be elaborated.


\section{Markov Chain Analysis of Threshold Policy}
In this section, we first exploit Markov chain to analyze the relationship among the threshold policy, collision probability, the average AoI, and other system parameters. Then, based on the theoretical analysis of the threshold policy, we calculate the threshold $\Gamma$ and the randomization probability $\mu$ to form the optimal policy. 

\subsection{Construction of two dimensional Markov Chain}

Given a threshold policy with a threshold $\Gamma$, a two-dimensional Markov chain can be constructed, as displayed in Fig. \ref{fig1}. AoI and channel state are used to represent the SIoT system state. If the AoI is $1$ at the beginning of a time slot, the channel state must be idle. This is because AoI will drop to $1$ when the transmission is successful, and if the channel state is busy, the transmission will be interfered by the PU, which in turn, leads to transmission failure. The probability of state $s=(1,1)$ is $0$, hence, omitted in Fig. \ref{fig1}.
\begin{figure}[!h]
	\centerline{\includegraphics[width=0.45\textwidth]{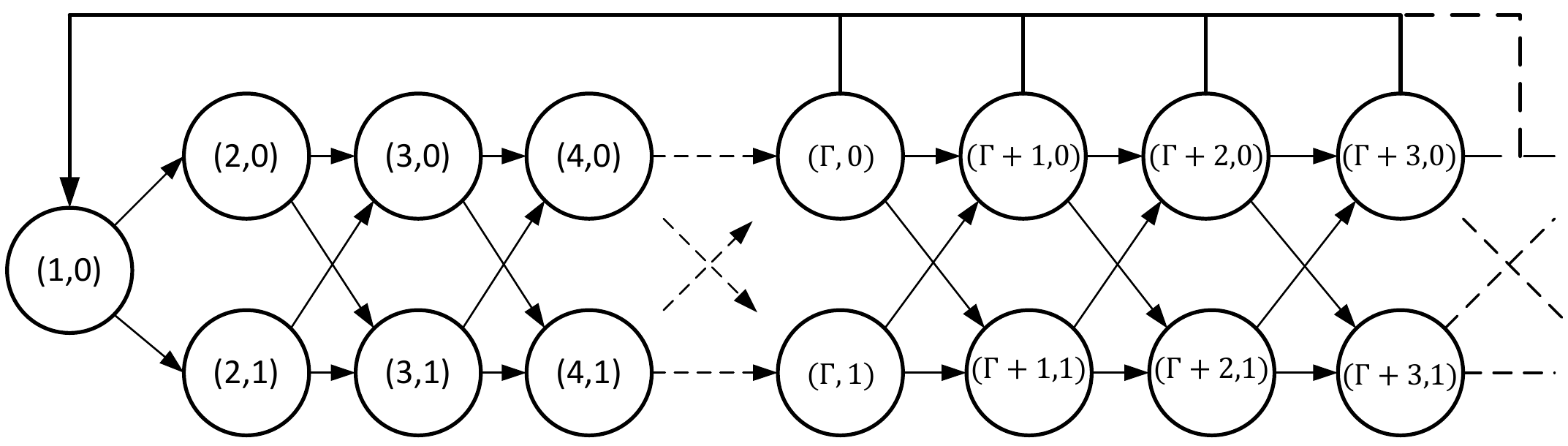}}
	\caption{Two dimensional Markov Chain under threshold policy with threshold $\Gamma$, where state $(i,j)$ indicates AoI is $i$, the channel is idle when $j=0$ and the channel is idle when $j=1$. }
	\label{fig1}
\end{figure}

We can find that the states can be divided into two parts: states whose AoI $\delta$ is smaller than $\Gamma$ and the rest states. Define $\theta=\{\theta_{(\delta,u)}\}$, $\delta \in \{1,2,...\}$ and $u \in \{0,1\}$ as the steady distribution of the Markov chain under the threshold $\Gamma$.
\subsubsection{States with $\delta < \Gamma$} For these states, the SIoT should not conduct transmission. Hence, the AoI of the next state of all these states with $\delta < \Gamma$ will increase by $1$. That is, $\sum_{u}\theta_{(\delta,u)}=\sum_{u}\theta_{(\delta+1,u)}$. Considering the transition between PU activity, we have
\begin{equation}
\label{eq16}
\left(\theta_{(\delta,0)},\theta_{(\delta,1)}\right)
\begin{pmatrix}
p_{II}& p_{IB}\\
p_{BI}& p_{BB}
\end{pmatrix}=\left(\theta_{(\delta+1,0)},\theta_{(\delta+1,1)}\right)
\end{equation}
Then, using $\theta_{(1,0)}$ and $\Gamma$, we have the following Lemma.
\begin{lemma}
\label{l2}
The steady state probabilities $\theta_{(\delta,0)}$ and $\theta_{(\delta,1)}$, for $\theta \leq \Gamma$, are
\begin{equation}
\label{e17}
\theta_{(\delta,0)}=\theta_{(1,0)}\frac{\beta+\alpha \exp(-(\alpha+\beta)(\delta-1))}{\alpha+\beta}
\end{equation}
\begin{equation}
\label{e18}
\theta_{(\delta,1)}=\theta_{(1,0)}\frac{\alpha-\alpha \exp(-(\alpha+\beta)(\delta-1))}{\alpha+\beta}.
\end{equation}
\end{lemma}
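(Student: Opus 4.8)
The plan is to prove the two expressions by induction on $\delta$, using the one-step balance relation \eqref{eq16} as the recursion and the boundary condition at $\delta=1$ as the base case. The key observation I would exploit is that, for all levels below the threshold, no successful transmission can occur, so AoI can only increase by one; consequently the only way to reach level $\delta+1$ (for $\delta+1\le\Gamma$) is from level $\delta$, and the joint steady-state vector obeys the clean linear recursion $\left(\theta_{(\delta+1,0)},\theta_{(\delta+1,1)}\right)=\left(\theta_{(\delta,0)},\theta_{(\delta,1)}\right)\Sigma$. Iterating gives $\left(\theta_{(\delta,0)},\theta_{(\delta,1)}\right)=\left(\theta_{(1,0)},\theta_{(1,1)}\right)\Sigma^{\delta-1}$, so the whole lemma reduces to computing the first row of $\Sigma^{\delta-1}$ once the boundary value is fixed.

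For the base case I would first justify that $\theta_{(1,1)}=0$: AoI resets to $1$ only upon a successful update, which by the transmission model requires the channel to stay idle, hence the post-reset state is always $(1,0)$ and never $(1,1)$ (this is exactly the fact already used to omit $(1,1)$ from Fig.~\ref{fig1}). Thus the initial row vector is $\left(\theta_{(1,0)},0\right)$, and both claimed formulas reduce to the identities $\theta_{(1,0)}=\theta_{(1,0)}$ and $\theta_{(1,1)}=0$ at $\delta=1$.

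The cleanest way to execute the inductive step is to diagonalize $\Sigma$ rather than multiply out \eqref{eq16} blindly. The matrix $\Sigma$ in \eqref{eq1} has left eigenvector $(\beta,\alpha)$ with eigenvalue $1$ and left eigenvector $(\alpha,-\alpha)$ with eigenvalue $e^{-(\alpha+\beta)}$. I would therefore write the proposed solution in the eigenbasis,
\begin{equation}
\left(\theta_{(\delta,0)},\theta_{(\delta,1)}\right)=\frac{\theta_{(1,0)}}{\alpha+\beta}\left[(\beta,\alpha)+e^{-(\alpha+\beta)(\delta-1)}(\alpha,-\alpha)\right],
\end{equation}
which one checks equals $\left(\theta_{(1,0)},0\right)$ at $\delta=1$ and, component-wise, reproduces \eqref{e17}--\eqref{e18}. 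Applying $\Sigma$ on the right then multiplies the first bracketed term by $1$ and the second by $e^{-(\alpha+\beta)}$, i.e., it advances the exponent from $\delta-1$ to $\delta$ while leaving the stationary-proportion term $(\beta,\alpha)$ intact. This is precisely the form of the claim at level $\delta+1$, closing the induction for $\delta=1,\dots,\Gamma$.

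I do not expect a serious obstacle here; the only points requiring care are (i) confirming that the recursion \eqref{eq16} is valid throughout the range $\delta<\Gamma$ (equivalently, that levels $2,\dots,\Gamma$ receive probability inflow only from the level immediately below, which hinges on no reset occurring below the threshold), and (ii) the algebraic bookkeeping of the matrix multiplication. The eigen-decomposition handles (ii) transparently: the substitution $e^{-(\alpha+\beta)\delta}=e^{-(\alpha+\beta)}e^{-(\alpha+\beta)(\delta-1)}$ is exactly what makes the exponent advance by one, and it is the eigenvalue structure of $\Sigma$ that guarantees the constant term $(\beta,\alpha)/(\alpha+\beta)$ (the channel's stationary distribution) is preserved. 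An equivalent, slightly more pedestrian route is to substitute \eqref{e17}--\eqref{e18} directly into \eqref{eq16} and verify the two resulting scalar equations, but the eigenbasis argument makes the mechanism behind the closed form explicit.
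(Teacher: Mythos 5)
Your proof is correct and follows essentially the same route as the paper's: both reduce the lemma to iterating the channel transition matrix from the boundary state, i.e.\ $\left(\theta_{(\delta,0)},\theta_{(\delta,1)}\right)=\left(\theta_{(1,0)},0\right)\Sigma^{\delta-1}$ together with the observation that $\theta_{(1,1)}=0$. The only difference is that the paper simply quotes the closed form of $\Sigma^{t}$ from the cited reference, whereas you re-derive it via the left eigenvectors $(\beta,\alpha)$ and $(\alpha,-\alpha)$ with eigenvalues $1$ and $e^{-(\alpha+\beta)}$ — a self-contained but equivalent computation.
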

\begin{proof}
	See Appendix \ref{A3}.
\end{proof}
\subsubsection{States with $\delta \geq\Gamma$} For these states, if the channel state is idle, then the SIoT should conduct transmission. For this case, if the channel state remains idle during the transmission and the transmission suffers no outage, then system state of the SIoT will return to $s=(1,0)$; otherwise, the AoI will increase by $1$.  
If the channel state is busy, no transmission will be conducted and the AoI of the next system state will also increase by $1$.

Then, for $\delta \geq \Gamma$, we have
\begin{equation}
\label{eq18}
\left(\theta_{(\delta,0)},\theta_{(\delta,1)}\right)
\begin{pmatrix}
A& p_{IB}\\
p_{BI}& p_{BB}
\end{pmatrix}=\left(\theta_{(\delta+1,0)},\theta_{(\delta+1,1)}\right),
\end{equation} where $A=p_{II}-(1-\phi_s)e^{-\alpha}$, and
\begin{equation}
\label{eq19}
\sum_{\delta\geq\Gamma}\theta_{(\delta,0)}(1-\phi_s)e^{-\alpha}=\theta_{(1,0)}.
\end{equation} According to the relationship between states whose AoI difference is $1$, we can iteratively calculate the steady state probabilities for states with $\delta \geq\Gamma$ given in the following lemma
\begin{lemma}
	\label{l3}
	The steady state probabilities $\theta_{(\delta,0)}$ and $\theta_{(\delta,1)}$, for $\delta \geq \Gamma$, are
\begin{equation}
	\label{e19}
\theta_{(\delta,0)}=\theta_{(\Gamma,0)}\frac{p_{BI}}{m}\left(ab^{\delta-\Gamma}-cd^{\delta-\Gamma}\right)+\theta_{(\Gamma,1)}\frac{p_{BI}}{m}\left(b^{\delta-\Gamma}-d^{\delta-\Gamma}\right),
\end{equation}
\begin{equation}
\label{e20}
\theta_{(\delta,1)}=\theta_{(\Gamma,0)}\frac{p_{BI}}{m}ac\left(d^{\delta-\Gamma}-b^{\delta-\Gamma}\right)+\theta_{(\Gamma,1)}\frac{p_{BI}}{m}\left(ad^{\delta-\Gamma}-cb^{\delta-\Gamma}\right)
\end{equation}
where $m=\sqrt{(A+p_{BI})(A+p_{BI}-2)+1+4p_{BI}p_{IB}}$, $a=\frac{A+p_{BI}+m-1}{2p_{BI}}$, $b=\frac{A-p_{BI}+m+1}{2}$,$c=\frac{A+p_{BI}-m-1}{2p_{BI}}$ and $d=\frac{A-p_{BI}-m+1}{2}$.
\end{lemma}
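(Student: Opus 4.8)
The plan is to treat the balance equations \eqref{eq18} for $\delta \geq \Gamma$ as a first-order linear vector recurrence and solve it in closed form by diagonalizing the associated transition matrix. Writing $\vec{\theta}_\delta = (\theta_{(\delta,0)},\theta_{(\delta,1)})$ as a row vector and setting $M = \begin{pmatrix} A & p_{IB} \\ p_{BI} & p_{BB}\end{pmatrix}$, equation \eqref{eq18} reads $\vec{\theta}_{\delta+1} = \vec{\theta}_\delta M$, whose solution is $\vec{\theta}_\delta = \vec{\theta}_\Gamma M^{\delta-\Gamma}$. (This is the \emph{complete} steady-state balance for the tail states, since for $\delta+1 > \Gamma$ the only inflow into $(\delta+1,\cdot)$ comes from $(\delta,\cdot)$, a successful transmission instead resetting the AoI to $1$.) Everything thus reduces to computing powers of the $2\times 2$ matrix $M$ via its spectral decomposition.

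First I would compute the eigenvalues of $M$ from the characteristic equation $\lambda^2 - (A+p_{BB})\lambda + (A p_{BB} - p_{IB}p_{BI}) = 0$. Using $p_{BB} = 1 - p_{BI}$, the discriminant simplifies to $(A + p_{BI} - 1)^2 + 4 p_{BI}p_{IB}$, which is exactly $m^2$; hence the two eigenvalues are $\tfrac{1}{2}(A + 1 - p_{BI} \pm m)$, i.e. precisely the constants $b$ (the $+$ root) and $d$ (the $-$ root) of the statement. The general solution therefore has the form $\theta_{(\delta,0)} = C_1 b^{\delta-\Gamma} + C_2 d^{\delta-\Gamma}$, and the companion component $\theta_{(\delta,1)}$ follows from the eigenvector relation: a left eigenvector of $M$ for eigenvalue $\lambda$ is proportional to $(p_{BI}, \lambda - A)$, so each mode contributes to $\theta_{(\delta,1)}$ with weight $(\lambda - A)/p_{BI}$ relative to its contribution to $\theta_{(\delta,0)}$.

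Next I would fix $C_1,C_2$ from the boundary data at $\delta=\Gamma$, namely $\theta_{(\Gamma,0)} = C_1 + C_2$ together with the corresponding relation for $\theta_{(\Gamma,1)}$, and solve the resulting $2\times 2$ linear system. The decisive simplification is the pair of identities $A - d = p_{BI}\,a$ and $A - b = p_{BI}\,c$, which follow directly from the definitions of $a,c$ once $b,d$ are identified as the eigenvalues; together with $b - d = m$ they collapse the coefficients into $C_1 = \tfrac{p_{BI}}{m}\bigl(a\,\theta_{(\Gamma,0)} + \theta_{(\Gamma,1)}\bigr)$ and $C_2 = -\tfrac{p_{BI}}{m}\bigl(c\,\theta_{(\Gamma,0)} + \theta_{(\Gamma,1)}\bigr)$. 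Substituting back and grouping by $\theta_{(\Gamma,0)}$ and $\theta_{(\Gamma,1)}$ yields \eqref{e19}; using $(\lambda-A)/p_{BI} = -c$ for $\lambda=b$ and $= -a$ for $\lambda=d$ in the companion component yields \eqref{e20}.

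I expect the only real obstacle to be bookkeeping rather than anything conceptual: recognizing that the opaque constants $a,b,c,d,m$ are nothing more than repackaged eigendata of $M$, and verifying the alignment identities $A-d=p_{BI}a$ and $A-b=p_{BI}c$ that tie the eigenvector weights to $a$ and $c$. Once these are in hand, the closed forms follow by routine substitution. A minor point worth noting, although not needed for the stated identity, is that $M$ is substochastic with its idle row summing to $1-(1-\phi_s)e^{-\alpha} < 1$, so its spectral radius $b$ is strictly below one, guaranteeing the geometric decay that renders the tail steady-state probabilities summable.
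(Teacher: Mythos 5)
Your proposal is correct and follows essentially the same route as the paper's Appendix C: both solve the tail recurrence $(\theta_{(\delta,0)},\theta_{(\delta,1)})=(\theta_{(\Gamma,0)},\theta_{(\Gamma,1)})M^{\delta-\Gamma}$ by diagonalizing $M=\begin{pmatrix}A& p_{IB}\\ p_{BI}& p_{BB}\end{pmatrix}$, with $b,d$ as its eigenvalues and $a,c,m$ as repackaged eigenvector data (the paper simply states the eigendecomposition that your alignment identities $A-d=p_{BI}a$ and $A-b=p_{BI}c$ derive). Your verification of those identities and of the boundary fit at $\delta=\Gamma$ is accurate, so no gap remains.
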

\begin{proof}
	See Appendix \ref{A3}.
\end{proof}

As  $\sum_{u}\sum_{\delta}\theta_{(\delta,u)}=1$, based on Lemma \ref{l2} and \ref{l3}, given the threshold $\Gamma$, we have
\begin{small}
	\begin{equation}
	\label{e21}
     \theta_{(1,0)}=\frac{1}{\Gamma-1+\frac{\alpha+\beta}{\beta e^{-\alpha}(1-\phi_s)}+\frac{\alpha}{(1-\exp(-\alpha-\beta))\beta}\left(1-\exp(-(\alpha+\beta)(\Gamma-1))\right)}.
	\end{equation}
\end{small}
\subsection{Calculation of collision probability}

According to the threshold policy, the SIoT only transmits when both the channel state is idle at the beginning of the time slot and the current AoI is not smaller than $\Gamma$. In addition, the collision occurs only when the PU becomes active during the transmission of the SIoT. Thus, only these states $\{(\delta,0)\}_{\delta \geq\Gamma}$ might cause a collision. The probability of the PU staying idle during the transmission is $\exp(-\alpha)$ according to the PU model, thus $1-\exp(-\alpha)$ is the probability that a collision happens after these states. Therefore, the collision probability can be expressed as
\begin{equation}
\psi_s(\Gamma)=\sum_{\delta\geq\Gamma}\theta_{(\delta,0)}(1-\exp(-\alpha)).
\end{equation} Combining with \eqref{eq19}, the collision probability can be further expressed by
\begin{equation}
\label{eq21}
\psi_s(\Gamma)=\frac{\theta_{(1,0)}(1-\exp(-\alpha))}{(1-\phi_s)\exp(-\alpha)}.
\end{equation}

\subsection{Calculation of the average AoI}

There are two ways to calculate the average AoI, the one is from the steady state distribution, and the other is by analyzing the AoI evolution.
Using the steady state distribution, the average AoI can be expressed by
\begin{equation}
\label{e22}
\bar{\Delta}(\Gamma)=\sum_{\delta}\delta\sum_{u}\theta_{(\delta,u)}.
\end{equation} By substituting Eqs. \eqref{e17}, \eqref{e18}, \eqref{e19}, \eqref{e20} and \eqref{e21} into Eq.\eqref{e22}, we can calculate the average AoI, as given in \eqref{eq22}.
\begin{equation}
\label{eq22}
\bar{\Delta}(\Gamma)=\Gamma-\frac{\left(\frac{\Gamma (\Gamma-1)}{2}-(1-\frac{\alpha+\beta}{\beta(1-\exp(-\alpha+\beta))}-\frac{\alpha+\beta}{\beta e^{-\alpha}(1-\phi_s)})\frac{\alpha \exp(-(\alpha+\beta)(\Gamma-1))}{\beta(1-\exp(-\alpha-\beta))}-\Xi\right)}{\Gamma-1+\frac{\alpha+\beta}{\beta e^{-\alpha}(1-\phi_s)}+\frac{\alpha}{(1-\exp(-\alpha+\beta))\beta}\left(1-\exp(-(\alpha+\beta)(\Gamma-1))\right)}
\end{equation}
\begin{small}
	\begin{equation}
	\begin{aligned}
	\Xi=\frac{{\left((\alpha+\beta)\exp(\alpha)+\alpha(1-\phi_s)\right)}^2}{\beta^2{(1-\phi_s)}^2}-\frac{(\beta+\alpha)\exp(\alpha)}{\beta(1-\phi_s)}+\frac{\left(\frac{2\alpha(\alpha+\beta)(\exp(a)+1-\phi_s)}{\beta^2(1-\phi_s)}-\frac{\alpha}{\beta}\right)}{\exp(\alpha+\beta)-1}+\frac{\alpha(\alpha+\beta)}{\beta^2{(\exp(\alpha+\beta)-1)}^2}
	\end{aligned}
	\end{equation}
\end{small}

\subsection{Calculation of optimal stationary policy}
Based on the above analysis of the threshold policy, given the constrained average collision probability of SIoT $\eta_s$, we can calculate the optimal stationary policy $(\mu, \pi^{*}_{\lambda_1}, \pi^{*}_{\lambda_2})$ which randomly selects one of the two threshold policy $\pi_{\Gamma_1^*}$ and $\pi_{\Gamma_2^*}$, with probability $\mu$ at each state by solving the following equation
\begin{equation}
\label{s0}
\psi_s(\Gamma)=\frac{\theta_{(1,0)}(1-\exp(-\alpha))}{(1-\phi_s)\exp(-\alpha)}=\eta_s.
\end{equation} Then we have 
\begin{equation}
\label{s1}
\Gamma_1^*=\lfloor \frac{1}{\alpha+\beta} W\left(\frac{(\alpha+\beta)\alpha}{\beta(1-e^{-(\alpha+\beta)})}\exp\left(\frac{(\alpha+\beta)\alpha}{\beta(1-e^{-(\alpha+\beta)})}\Upsilon\right)\right) -\frac{\alpha\Upsilon}{\beta(1-e^{-(\alpha+\beta)})}+1\rfloor,
\end{equation}
\begin{equation}
\label{s2}
\Gamma_2^*=\lceil \frac{1}{\alpha+\beta} W\left(\frac{(\alpha+\beta)\alpha}{\beta(1-e^{-(\alpha+\beta)})}\exp\left(\frac{(\alpha+\beta)\alpha}{\beta(1-e^{-(\alpha+\beta)})}\Upsilon\right)\right) -\frac{\alpha\Upsilon}{\beta(1-e^{-(\alpha+\beta)})}+1\rceil,
\end{equation} where 
$\Upsilon=\frac{{(\alpha+\beta)(1-\exp(-\alpha-\beta))}}{\alpha \exp(-\alpha)(1-\phi_s)}+1-\frac{(1-\exp(-\alpha))(1-\exp(-\alpha-\beta))\beta}{\eta_s(1-\phi_s)\exp(-\alpha)\alpha}$ and $W(\cdot)$ is Lambert W function\cite{dence2013brief}. 

If $\Gamma_1^*=\Gamma_2^*$, then, the optimal policy is a deterministic policy with threshold $\Gamma_1^*$, i.e., the optimal policy $\pi(\Gamma_1^*)$ for the SIoT is to sample and deliver its status update when the channel is sensed idle and current AoI is not smaller than $\Gamma_1^*$. 

If $\Gamma_1^*\ne\Gamma_2^*$, i.e., $\Gamma_1^*=\Gamma_2^*-1$, then the optimal policy is $(\mu, \pi(\Gamma_1^*),\pi(\Gamma_2^*))$. We now need to calculate the randomization probability $\mu$. Based on the previous analysis of threshold policy, we have
\begin{equation}
\label{e17v1}
\theta_{(\delta,0)}=\theta_{(1,0)}\frac{\beta+\alpha \exp(-(\alpha+\beta)(\delta-1))}{\alpha+\beta},
\end{equation}
\begin{equation}
\label{e18v1}
\theta_{(\delta,1)}=\theta_{(1,0)}\frac{\alpha-\alpha \exp(-(\alpha+\beta)(\delta-1))}{\alpha+\beta},
\end{equation} for $\delta\leq \Gamma_1^*$ under $(\mu, \pi(\Gamma_1^*),\pi(\Gamma_2^*))$. The difference between the steady state distributions under threshold policy $\pi(\Gamma_1^*)$ and randomized simple policy $(\mu, \pi(\Gamma_1^*),\pi(\Gamma_2^*))$ is the steady state probability of $\{\theta_{(\delta,u)}\}_{\delta>\Gamma_1^*}$. This is because for state $s=(\Gamma_1^*,0)$, the SIoT delivers its status update with probability 1 under threshold policy $\pi(\Gamma_1^*)$ and with probability $\mu$ under the randomized simple policy $(\mu, \pi(\Gamma_1^*),\pi(\Gamma_2^*))$. Thus, similar to the above Markov chain analysis for the threshold policy, we have
\begin{equation}
\label{e17v2}
\theta_{(\Gamma_2^*,0)}=\theta_{(1,0)}\left(\frac{\beta+\alpha \exp(-(\alpha+\beta)\Gamma_1^*)}{\alpha+\beta}-\mu\exp(-\alpha)(1-\phi_s)\frac{\beta+\alpha \exp(-(\alpha+\beta)(\Gamma_1^*-1))}{\alpha+\beta}\right),
\end{equation}
\begin{equation}
\label{e18v2}
\theta_{(\Gamma_2^*,1)}=\theta_{(1,0)}\frac{\alpha-\alpha \exp(-(\alpha+\beta)\Gamma_1^*)}{\alpha+\beta}.
\end{equation} For the states $\{(\delta,u)\}_{\delta\geq\Gamma_2^*}$, their steady state distribution is similar to that under threshold policy $\pi(\Gamma_1^*)$, because the action for states $\{(\delta,0)\}_{\delta\geq\Gamma_2^*}$ is to sample and deliver status update. Similarly, we have
\begin{equation}
\label{e19v1}
\theta_{(\delta,0)}=\theta_{(\Gamma_2^*,0)}\frac{p_{BI}}{m}\left(ab^{\delta-\Gamma_2^*}-cd^{\delta-\Gamma_2^*}\right)+\theta_{(\Gamma_2^*,1)}\frac{p_{BI}}{m}\left(b^{\delta-\Gamma_2^*}-d^{\delta-\Gamma_2^*}\right),
\end{equation}
\begin{equation}
\label{e20v1}
\theta_{(\delta,1)}=\theta_{(\Gamma_2^*,0)}\frac{p_{BI}}{m}ac\left(d^{\delta-\Gamma_2^*}-b^{\delta-\Gamma_2^*}\right)+\theta_{(\Gamma_2^*,1)}\frac{p_{BI}}{m}\left(ad^{\delta-\Gamma_2^*}-cb^{\delta-\Gamma_2^*}\right),
\end{equation} for $\delta\geq\Gamma_2^*$. The equality of \eqref{eq21} still holds. Then, by solving \eqref{s0}, we can calculate the $\theta_{(1,0)}$. Substituting the calculated $\theta_{(1,0)}$, \eqref{e17v1}, \eqref{e18v1}, \eqref{e17v2}, \eqref{e18v2}, \eqref{e19v1} and \eqref{e20v1} into $\sum_{u}\sum_{\delta}\theta_{(\delta,u)}=1$, the value of $\mu$ can then be calculated as following
\begin{small}
\begin{equation}
\mu=\left(\Gamma_1^*+\frac{1-\left(1-\exp(-\alpha)\right)/\eta_s+\alpha/\beta}{(1-\phi_s)\exp(-\alpha)}+\frac{\alpha(1-\exp\left(-(\alpha+\beta)\Gamma_1^*\right))}{\beta\left(1-\exp(-\alpha-\beta)\right)}\right)\frac{\beta}{\beta+\alpha\exp\left(-(\alpha+\beta)(\Gamma_1^*-1)\right)}.
\end{equation}
\end{small} This completes the closed-form calculation of the optimal policy given the constraint $\eta_s$ and system parameters.

\section{Throughput-Optimal Policy}
In this part, we compare the throughput-optimal policy (the policy that maximizes throughput of SIoT in CR-IoT system) with the derived age-optimal policy and analyze the average AoI performance for throughput-optimal policy in CR-IoT system. According to \cite{huang2009optimization}, the throughput-optimal policy is a randomized policy. In this policy, when the channel state is idle, the SIoT device transmits its status update with a fixed probability $p_0$ to achieve the maximum allowable collision probability $\eta_s$. The state transition diagram under the randomized policy is illustrated in Fig. \ref{fig10}. The optimal throughput of the SIoT is $p_I p_0$ and the collision probability is $p_0 p_I (1-\exp(-\alpha))$. Comparing to the derived age-optimal policy, the throughput-optimal policy only depends on the transmission probability $p_0$, while the age-optimal policy relies on the AoI threshold $\Gamma$ and randomization probability $\mu$. 

\begin{figure}[!h]
	\centerline{\includegraphics[width=0.45\textwidth]{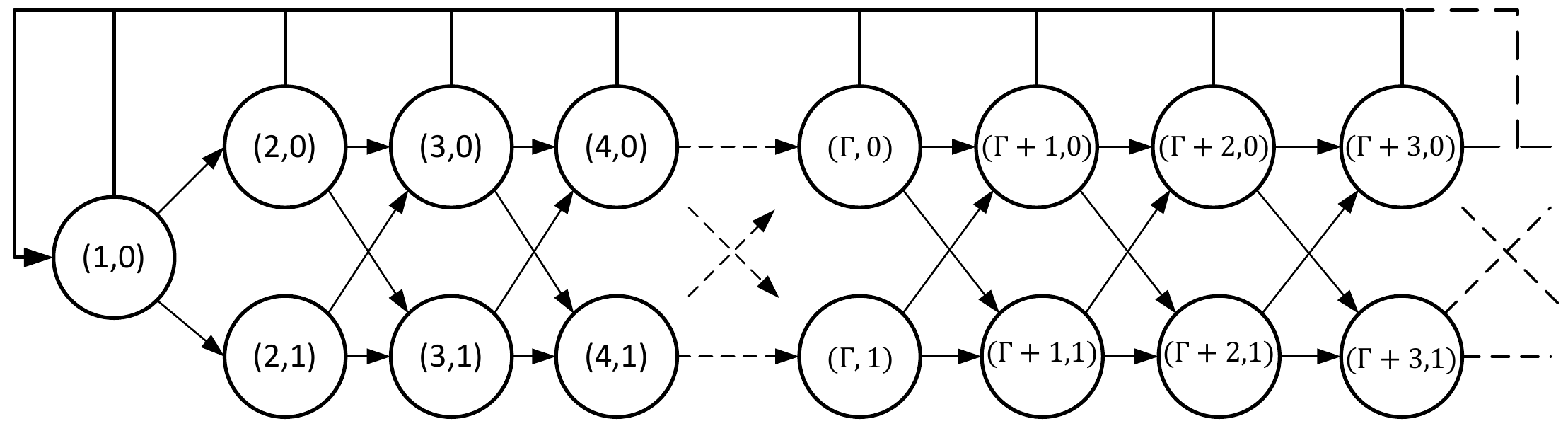}}
	\caption{Two dimensional Markov Chain under randomized policy that transmit a status update with probability $p_0$ when channel is sensed idle, where state $(i,j)$ indicates AoI is $i$, channel is idle when $j=0$ and channel is idle when $j=1$. }
	\label{fig10}
\end{figure}
We now analyze the AoI performance under the throughput-optimal policy. The probability $p_0$ can be calculated as 
\begin{equation}
\label{eq39}
p_0=\frac{\eta_s}{\frac{\beta}{\alpha+\beta}(1-\exp(-\alpha))},
\end{equation} considering the long-term average idle probability of the PU $p_I$ and the probability that PU returns during a slot if the PU is idle at the beginning of the slot. Based on the calculated probability in \eqref{eq39}, we have the following equation regarding the relationship among different states,
\begin{equation}
\left(\theta_{(\delta,0)},\theta_{(\delta,1)}\right)
\begin{pmatrix}
C& p_{IB}\\
p_{BI}& p_{BB}
\end{pmatrix}=\left(\theta_{(\delta+1,0)},\theta_{(\delta+1,1)}\right),
\end{equation} where $C=p_{II}-p_0(1-\phi_s)e^{-\alpha}$, and 
\begin{equation}
\label{eq23}
\sum_{\delta}\theta_{(\delta,0)}p_0(1-\phi_s)e^{-\alpha}=\theta_{(1,0)}.
\end{equation} Similarly, we can calculate the steady state distribution as in Section IV. Based on the distribution, the average AoI achieved by $p_0$ can be expressed as
\begin{equation}
\bar{\Delta}(p_0)=\frac{(\alpha+\beta)\exp(\alpha)}{\beta (1-\phi_s) p_0}+\frac{\alpha \exp(\alpha+\beta)}{\beta (\exp(\alpha+\beta)-1)}.
\end{equation} The detailed steps are omitted for brevity as it is similar to the analysis of the threshold policy.

\section{Numerical Results}
In this section, numerical results are presented to verify the effectiveness of the optimal policy and the accuracy of our analysis. 
\subsection{Simulation Settings}
We first generate the traffic of the PU according to the two-state homogeneous continuous-time
Markov chain model. Specifically, it consists of $10^6$ successive busy-idle cycles; the length of busy and idle periods in each cycle is continuous and follows the exponential distribution. The slot length of the SIoT is set to $1$. The SIoT device senses the channel at the beginning of the time slot. The collision probability of the PU is calculated according to \eqref{eq2} and \eqref{eq2**}. The average AoI of the SIoT is evaluated by taking the average of the AoI during the busy-idle cycles of the PU. 
\subsection{Optimal Stationary Policy for CMDP}
In this subsection, we evaluate the structure of the optimal policy. Fig. \ref{fig3} illustrates the optimal policy by Relative Value Iteration (RVI) and stochastic gradient descent for the search of $\lambda$ for a given set of $\eta_s$, $\phi_s$, $\alpha$ and $\beta$. We apply RVI on finite states ($\delta \leq 200$) to approximate the countable infinite state space according \cite[chapter 8]{sennott2009stochastic}. The optimal policy is a randomized simple policy that randomizes between two deterministic policies $\pi_{\lambda_1^*}$ and $\pi_{\lambda_2^*}$ at each state, which is displayed in Fig. \ref{fig3} and Fig. \ref{fig3*} for different collision constraints. The threshold structure of the two deterministic policies is obvious. By substituting the system parameters into \eqref{s1} and \eqref{s2}, the corresponding thresholds can be calculated and verified by comparing with the numerical results. Comparing Fig. \ref{fig3} with Fig. \ref{fig3*}, we can see that the tighter collision constraint leads to a larger AoI threshold. Furthermore, we can find that as $\lambda$ increases, the threshold of the corresponding unconstrained MDP policy increases. This is because the larger $\lambda$ indicates the higher weight of collision cost in the reward function of Lagrangian relaxed unconstrained MDP, which in turn, leads to a larger threshold to minimize the long-term average reward.
\begin{figure}[!h]
	\centering
	\begin{subfigure}[b]{0.45\textwidth}
		\centering
		\includegraphics[width=\textwidth]{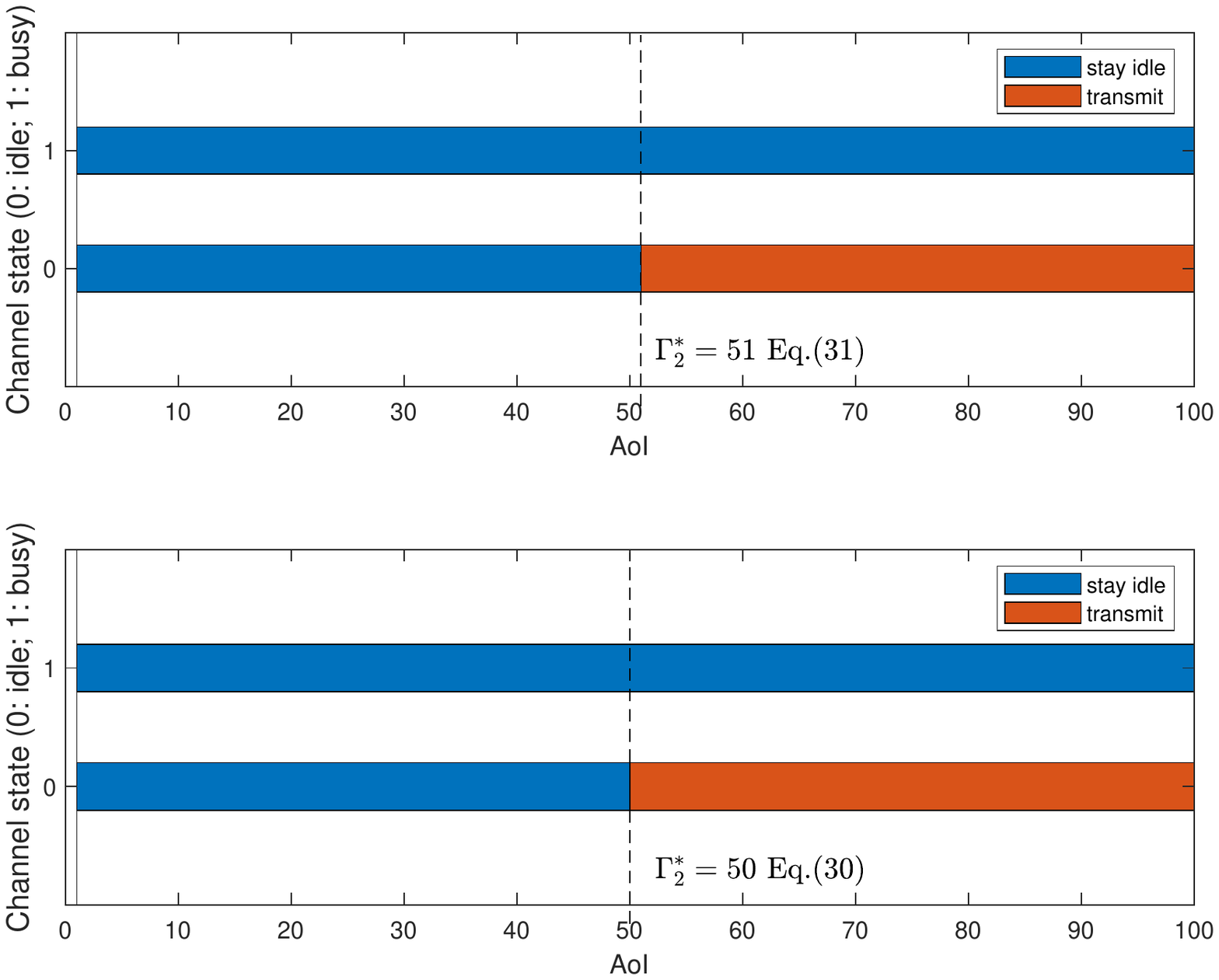}
		\caption{$\eta_s=0.0005$}
		\label{fig3}
	\end{subfigure}
	\hfill
	\begin{subfigure}[b]{0.45\textwidth}
		\centering
		\includegraphics[width=\textwidth]{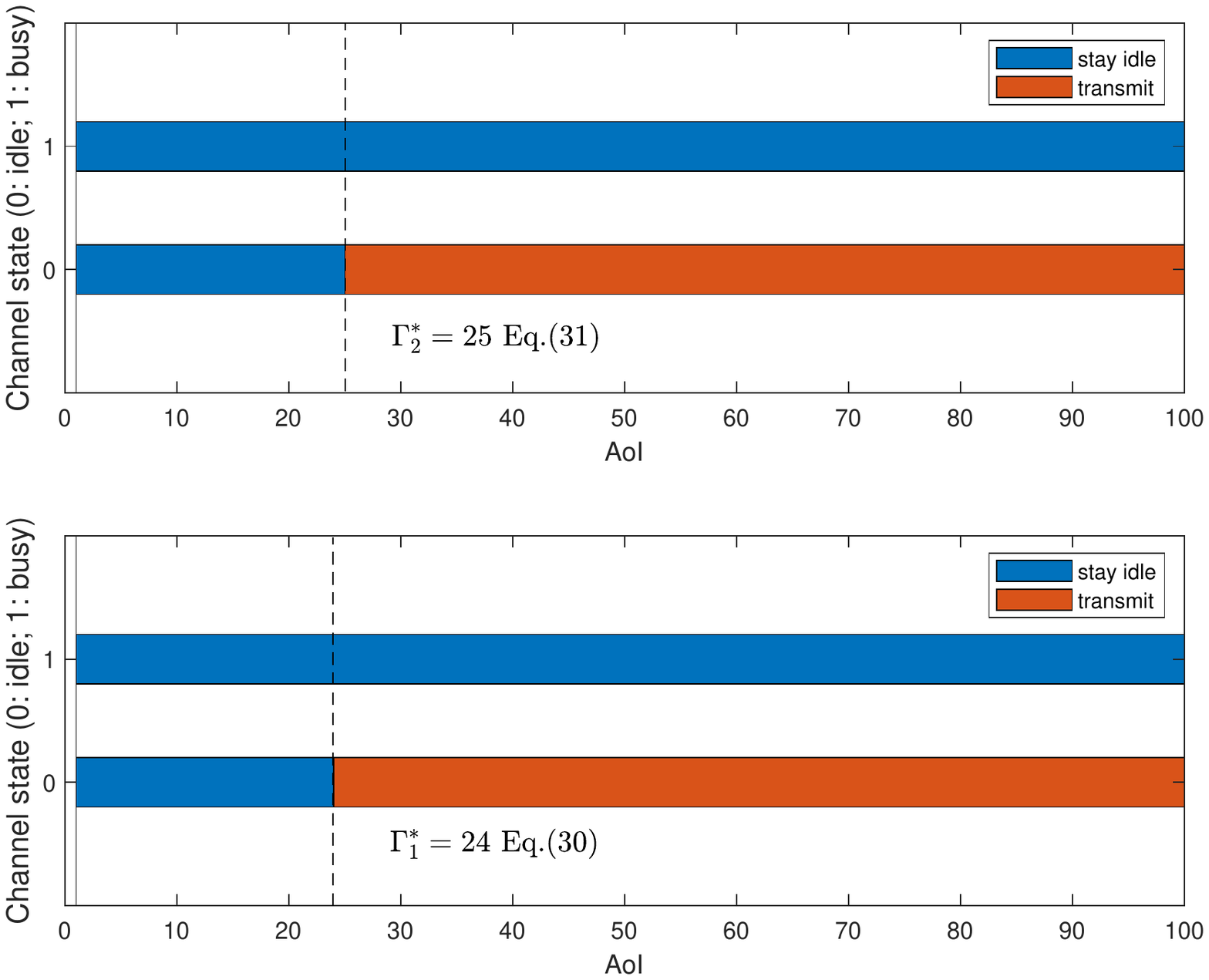}
		\caption{$\eta_s=0.001$}
		\label{fig3*}
	\end{subfigure}
	\caption{Structural deterministic policy for $\lambda_{1}^*$ (top) and $\lambda_{2}^*$ (bottom) where $\lambda_{1}^*>\lambda_{2}^*$, with $\phi_s=0.2,\alpha=0.02$ and $\beta=0.4$.}
\end{figure}

\subsection{Evaluation of Markov chain analysis}
In Fig.\ref{fig4}, we evaluate the Markov chain analysis of average AoI and collision probability. The analytical results are compared with the simulation results to verify the theoretical analysis of our system. The traffic pattern of the PU is set as $\alpha =0.02$ and $\beta=0.4$. First, we can see that our analytical results well coincide with the simulation results which verifies our Markov chain analysis of average AoI under threshold policy. Second, we can observe the trade-off between average AoI and the collision probability. Specifically, the policy with a larger threshold leads to larger average AoI and smaller collision probability, as shown in Fig.\ref{fig4}. This is reasonable because the SIoT is less likely to conduct transmission with a larger threshold, which in turn increases average AoI and decreases collision probability. Furthermore, we observe that the average AoI approximately linearly increases with respect to the threshold $\Gamma$, while the collision probability exponentially decreases.
\begin{figure}[!h]
	\centerline{\includegraphics[width=0.45\textwidth]{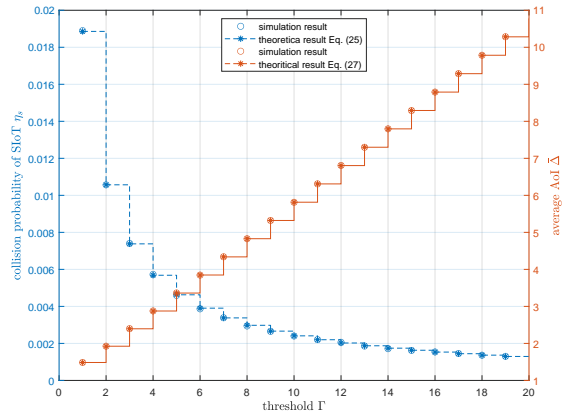}}
	\caption{Comparisons of simulation average AoI result and theoretical average AoI result of threshold policy with $\phi_s=0.2$, $\alpha=0.02$ and $\beta=0.4$.}
	\label{fig4}
\end{figure}

\subsection{Evaluation of the Optimal Policy}
\begin{figure}[!h]
	\centerline{\includegraphics[width=0.45\textwidth]{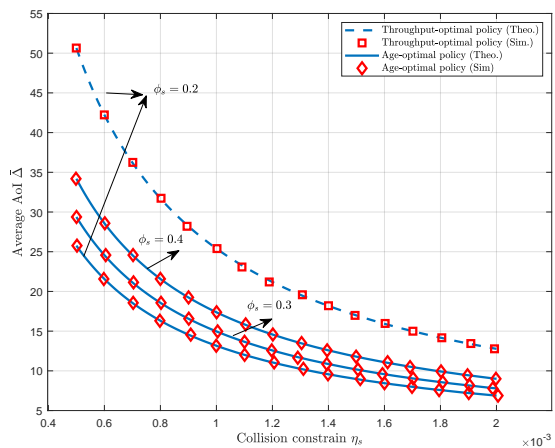}}
	\caption{Average AoI-collision probability trade-off for $\alpha=0.02$ and $\beta=0.4$.}
	\label{fig6}
\end{figure}
In Fig.\ref{fig6}, we compare performance of the age-optimal policy with different system parameters with that of the throughput-optimal policy with respect to different collision probability constraints. 
Compared with the calculated age-optimal policy, the throughput-optimal policy is simple and easy to implement at the cost of larger average AoI, as illustrated in Fig. \ref{fig6}. Specifically, the throughput-optimal policy does not rely on updating feedback from the destination to the SIoT but the calculated transmission probability. As shown in Fig. \ref{fig6}, the age-optimal policy achieves lower average AoI, comparing to the throughput-optimal policy even when suffering from a larger outage probability. This also confirms that the throughput optimization is different from AoI optimization. Moreover, for a fixed collision probability, it is obvious that larger outage probability leads to larger average AoI. In addition, we can find that when the collision constraint is tighter, the effect of outage probability on the average AoI is more significant. 

\subsection{Effect of System Parameters}
We evaluate the effect of average idle probability $p_{I}$ on the optimal average AoI by fixing average idle state length $\alpha^{-1}$ while changing the average busy state length $\beta^{-1}$ in Fig. \ref{fig5} and Fig. \ref{fig7}. The difference between Fig. \ref{fig5} and Fig. \ref{fig7} lies in the collision constrains. The collision constraints in Fig. \ref{fig5} are from the PU's perspective $\eta_p$ while that in Fig. \ref{fig7} is from the SIoT's perspective $\eta_s$. Both figures show that the larger average idle probability leads to the smaller average AoI as there is more white space in the frequency band for the SIoT to make use of (low utilization). Similarly, it is obvious that as the collision constraint of the PU and that of SU increase, the influence on the optimal average AoI is less significant. This is due to the relationship between optimal average AoI and the collision constraint as in Fig. \ref{fig6}, where the average AoI decreases at a slower speed when the collision constraint increases. Actually, for each data point with the same $\eta_s$ constraint in Fig. \ref{fig7}, the corresponding collision probability from PU's perspective is different due to the changes in $\beta$ and the relationship between $\eta_p$ and $\eta_s$ in \eqref{eq2**}. When $\eta_s$ is fixed, as idle rate $\beta$ increases, although there is more white space in PU's frequency for SIoT to use, the corresponding $\eta_p$ decreases and it further poses a tighter constraint for PU protection. These two effects jointly lead to a slower decreasing rate.

\begin{figure}[!h]
	\centering
	\begin{subfigure}[b]{0.45\textwidth}
		\centering
		\includegraphics[width=\textwidth]{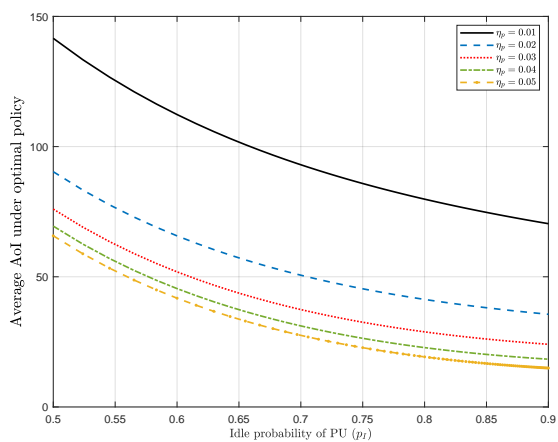}
		\caption{Different collision probability of PU $\eta_p$}
		\label{fig5}
	\end{subfigure}
	\hfill
	\begin{subfigure}[b]{0.45\textwidth}
		\centering
		\includegraphics[width=\textwidth]{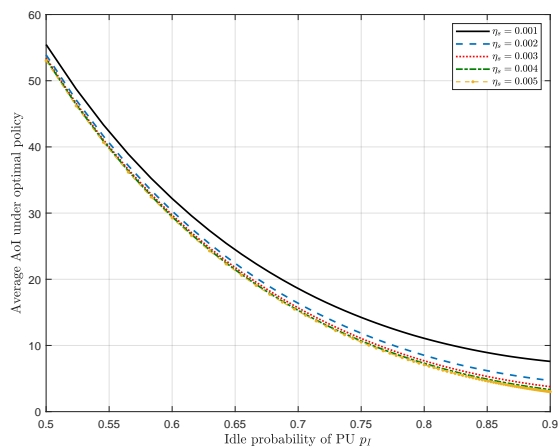}
		\caption{Different collision probability of SIoT $\eta_s$}
		\label{fig7}
	\end{subfigure}
	\caption{Idle probability $p_I=\frac{\beta}{\alpha+\beta}$ vs average AoI under optimal policy $\phi_s=0.2$, $\alpha=0.01$ with different collision probability of PU $\eta_p$ and with different collision probability of SIoT $\eta_s$.}
\end{figure}
To analyze how PU's activity frequency (busy-idle cycle length) influences the average AoI of SIoT under the collision constraint $\eta_p$, we fix the average idle probability $p_I$ and change $\alpha$ and $\beta$ accordingly. The result is depicted in Fig. \ref{fig8}. We can observe the tradeoff between the PU’s activity frequency and the average AoI performance of the SIoT for a given collision constraint of the PU.  Specifically, as the PU's activity frequency increases (smaller average idle state length $\alpha^{-1}$ and busy state length $\beta^{-1}$), the optimal average AoI of the SIoT decreases at first, and then increases gradually. The reason behind this observation is that if the PU's activity frequency is relatively low, i.e., long busy-idle cycle, then during the busy state of the PU, the AoI will keep increasing. Although in the idle state, the larger AoI effect can be alleviated to some extent, the negative effect on the average AoI during the busy period is dominant. As the PU's activity frequency increases, during the same period, it is more likely for the SIoT to cause a collision to the PU if the SIoT does not change the threshold of its transmission. This is because there will be more busy-idle cycles. Thus, the threshold of the age-optimal policy will increase accordingly to satisfy the collision constraint of PU. The larger threshold will lead to larger average AoI. These two effects together lead to the average AoI performance in as shown in Fig. \ref{fig8}. This tradeoff offers a way for the SIoT to choose the appropriate frequency band based on the PU activity in systems with multiple SIoT devices and multiple PU system. Take the special case of systems with two SIoT devices and two PUs as an example. Table \ref{tab1} givens the optimal average AoI performance of SIoT device $a$ and SIoT device $b$ under channels of different PUs. If these two devices need to select from No. $1$ and $2$ channels, then, the optimal choice to minimize the total average AoI of these two devices will be: SIoT device $a$ selects No. $2$ channel and SIoT device $b$ selects No. $1$ channel. However, when these two devices select from No. $1$ and $3$ channels, the optimal choice will be for SIoT device $a$ to select No. $1$ channel and for SIoT device $b$ to select No. $3$ channel.
\begin{figure}[!h]
	\centerline{\includegraphics[width=0.45\textwidth]{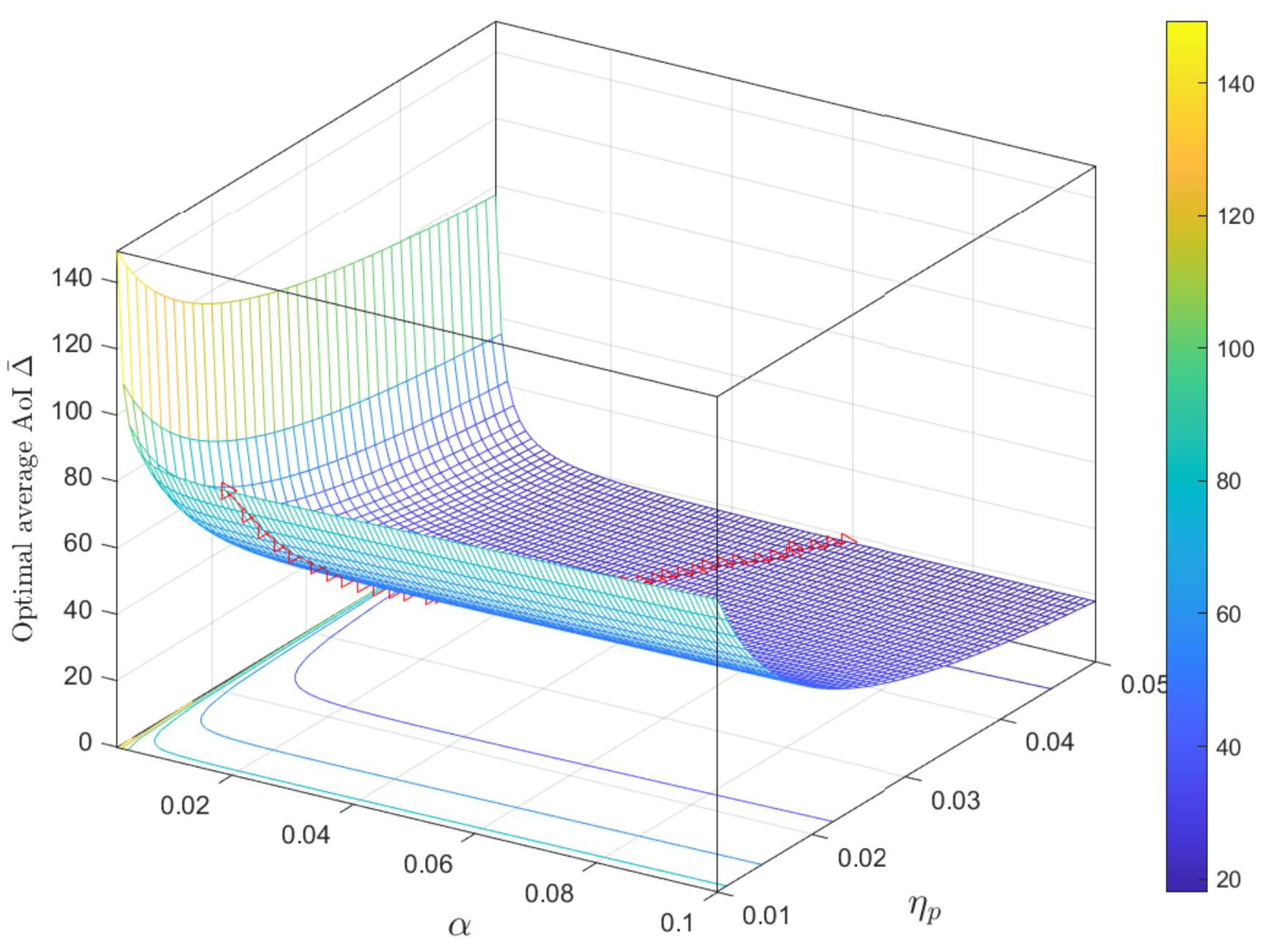}}
	\caption{Busy rate $\alpha$ vs average AoI under optimal policy  $p_I=0.75$. The red triangles are the point with optimal average AoI for fixed collision constraint of PU $\eta_p$.}
	\label{fig8}
\end{figure}

\begin{table}[!h]
	\renewcommand{\arraystretch}{1.3}
	\caption{Optimal AoI performance under channels with different PUs.}		
	\label{tab1}
	\centering
		\begin{tabular}{c*{4}{|c}}
		\toprule
		No. of PUs $p_I=0.75$  & 1& 2 &3 &4 \\
		\cline{1-1}
		 $\alpha$ & $0.002$ & $0.01$ &$0.002$ &  $0.01$\\
		 $\eta_p$ & $0.01$ &$0.01$ &$0.05$ &   $0.05$\\ 
		 \hline
		SIoT a $\phi_s=0.2$  & $109.90$ &  $ 85.82$ &$55.44$ & $22.77$ \\
	    SIoT b $\phi_s=0.3$  & $120.20$ & $97.60$ & $ 57.34$ & $24.86$  \\
		\bottomrule
	\end{tabular}
\end{table}


\section{Conclusions}
In this paper, we have studied the optimal policy that minimizes the long-term average Age of Information (AoI) for a cognitive radio-based IoT monitoring system, where an IoT device as a secondary user (SIoT)  seeks and exploits the spectrum opportunities of licensed band vacated by its primary user (PU) to deliver status updates timely with invisible effects to the licensed operation. We formulated the AoI minimization problem of the SIoT as a constrained Markov decision process (CMDP) problem. We proved that the optimal stationary policy is a randomized simple policy that randomizes at each state between two deterministic policies with a fixed probability and the deterministic policies have a threshold structure. We have derived the analytical expressions of the long-term average AoI performance and collision probability of deterministic threshold-structured policy, which offers a simple and effective way to calculate the age-optimal policy for the SIoT. We also considered the throughput-optimal policy for throughput maximization as a benchmark and derived the AoI performance under the throughput-optimal policy in the considered system. Numerical simulations showed the superior performance of the age-optimal policy comparing to the throughput-optimal policy and illustrated the difference between throughput maximization and AoI minimization. We also unveiled the impacts of various system parameters on the corresponding optimal policy and the resultant average AoI.

\begin{appendices}
	\section{Proof of Theorem \ref{T1}}
	\label{A1}
	For the case $D(\pi^{*}_{\lambda=0}|s_0) \leq \eta_s$, $\pi^{*}_{\lambda=0}$ will conduct transmissions for the SIoT if the channel is sensed idle, which is straightforward. 
	We then prove the rest part of the theorem by verifying that the Assumptions 1-5 in \cite{sennott1993constrained} hold. According to the definition of reward and cost function, Assumption 1 in \cite{sennott1993constrained} holds as in \eqref{e6}. Thus, we need to verify the rest four assumptions in \cite{sennott1993constrained}, which can be summarized in the following condition,
	\begin{itemize}
		\item There exists a stationary policy $e$ that induces an irreducible positive recurrent Markov chain on $S$ with a single positive recurrent class, such that $C(e|s_0)$ and $D(e|s_0)$ are finite, in particular  $D(e|s_0) <\eta_s$.
	\end{itemize}

We provide one of the stationary policies that satisfy the condition above for verification. A typical policy is a randomized stationary policy $\pi$ where the SIoT transmits a status update with a fixed probability $p$ if the channel is sensed idle. This policy is also the throughput-optimal policy with detailed analysis provided in Section V. The throughput-optimal policy meets the above condition and thus the Assumptions 1-5 in \cite{sennott1993constrained} hold. According to \cite[Theorem $2.5$]{sennott1993constrained}, we can prove the the case $D(\pi^{*}_{\lambda=0}|s_0) > \eta_s$ in Theorem \ref{T1}. This completes the proof.
	\section{Proof of Theorem \ref{T2}}
	\label{A2}
The threshold policy is actually the same as the monotone nondecreasing policy. To prove the monotonicity of the optimal policy of the unconstrained MDP problem, we verify that the following four conditions \cite[Theorem~8.11.3]{b5} hold.
\begin{itemize}
	\item [a)] $r(s,a)$ is nondecreasing in $s$ for all $a\in A$;
	\item [b)] $q(k|s,a)=\sum_{j=k}^{\infty}p(j|s,a)$ is nondecreasing in $s$ for all $k\in S$ and $a\in A$;
	\item [c)] $r(s,a)$ is a subadditive function on $S\times A$ and
	\item [d)] $q(k|s,a)$ is a subadditive function on $S\times A$ for all $k\in S$.
\end{itemize}
We only need to consider the case that the sensing result is idle, $u=0$, because when the channel is sensed busy, i.e.,$u=1$, conducting transmission will not lead to AoI drop but cause collision to the PU. To verify these conditions, we first order the state by $\delta$, i.e., $s^+\geq s^-$ if $\delta^+\geq \delta^-$ where $s^+=(\delta^+,\cdot)$ and $s^-=(\delta^-,\cdot)$. The one-step reward function of unconstrained MDP is 
\begin{equation}
\label{req1}
r(s,a)=\delta+\lambda (au+a(1-u)(1-e^{-\alpha})).
\end{equation} It is obvious that the condition a) is satisfied. According to the transition probabilities in \eqref{e3}, if the current state $s=(\delta,u)$, the next possible states are $s_1=(\delta+1,\cdot)$ (including $(\delta+1,0)$ and $s_2=(\delta+1,1)$) and $s_2=(1,0)$.  Based on \eqref{e3}, we have
\begin{equation}
\label{req2}
q(k|s,a=0)=\left\{
\begin{array}{rcl}
0,& &\text{if }  k > s_1 \\
1,& &\text{otherwise.}  \\
\end{array}
\right.
\end{equation}
\begin{equation}
\label{req3}
q(k|s,a=1)=\left\{
\begin{array}{rcl}
0,& &\text{if }  k > s_1\\
1-e^{-\alpha}(1-\phi_s),& &\text{if }  s_1\geq k>s_2 \\
1,& & \text{if } k=s_2
\end{array}
\right.
\end{equation} Thus, condition b) is immediate. To verify the rest two conditions, we give the definition of subadditive in the following
\begin{definition}
	\label{d0}
	(Subadditive\cite{b5}) A multivariable function $Q(\delta,u,a): \mathcal{D} \times \mathcal{U} \times \mathcal{A} \rightarrow R$ is subadditive in $(\delta,u,a)$ for fixed parameter $u \in U$, if for all $\delta^{+}\geq\delta^{-}$ and $a^{+}\geq a^{-}$,
	\begin{equation}
	\label{e8}
	Q(\delta^{+},a^{+};u)+ Q(\delta^{-},a^{-};u) \leq Q(\delta^{+},a^{-};u)+ Q(\delta^{-},a^{+};u)
	\end{equation}holds.
\end{definition} 
According to \eqref{req1}, condition c) follows. For the last condition, we verify whether
\begin{equation}
q(k|s^+,a^+)+q(k|s^-,a^-)\leq q(k|s^+,a^-)+q(k|s^-,a^+),
\end{equation} with $s^+=(\delta^+,u)$ and $s^-=(\delta^-,u)$ where $\delta^+\geq \delta^-$ and $a^+\geq a^-$. The equality holds for the case where $u=1$. So, we focus on the case $u=0$. If $k=s_2$ or $k>s_1$, the equality holds as well; otherwise, the left part is $1-e^{\alpha}(1-\phi_s)$ and right part is $1$. Consequently, condition d) holds. As all these four conditions hold, the optimal policy is monotone nondecreasing is $\delta$. This completes the proof.


\section{Proof of Lemma \ref{l2} and Lemma \ref{l3}}
\label{A3}

According to \eqref{eq1} and \eqref{eq16}, we have
\begin{equation}
\label{ea16}
\left(\theta_{(\delta+1,0)},\theta_{(\delta+1,1)}\right)=\left(\theta_{(1,0)},\theta_{(1,1)}\right)
\Sigma^{\delta-1}.
\end{equation}
According to \cite[E.q. (16)-(60)]{papoulis2002probability}, we have
\begin{equation}
\label{ea17}
\Sigma^{t}=\frac{1}{\alpha+\beta}\begin{pmatrix}
\beta+\alpha e^{-(\alpha+\beta)t}&\alpha-\alpha e^{-(\alpha+\beta)t}\\
\beta-\beta e^{-(\alpha+\beta)t}& \alpha+\beta e^{-(\alpha+\beta)t}
\end{pmatrix}.
\end{equation} By substituting \eqref{ea17} and $\theta_{(1,1)}=0$ into \eqref{ea16}, Lemma \ref{l2} can be obtained. 

As for Lemma \ref{l3}, according to \eqref{eq18}, we have
\begin{equation}
\label{ea18}
\left(\theta_{(\delta,0)},\theta_{(\delta,1)}\right)=\left(\theta_{(\Gamma,0)},\theta_{(\Gamma,1)}\right)
\begin{pmatrix}
A& p_{IB}\\
p_{BI}& p_{BB}
\end{pmatrix}^{\delta-\Gamma}.
\end{equation}By doing eigenvalue decomposition, we have 
\begin{equation}
\label{ea19}
\begin{pmatrix}
A& p_{IB}\\
p_{BI}& 1-p_{BI}
\end{pmatrix}^{\delta-\Gamma}=
\begin{pmatrix}
c& a\\
1& 1
\end{pmatrix}
	\begin{pmatrix}
d& 0\\
0& b
\end{pmatrix}^{\delta-\Gamma}\left(\frac{p_{BI}}{m}
\begin{pmatrix}
-1& a\\
1& -c
\end{pmatrix}\right)
\end{equation}
By substituting \eqref{ea19}, $\theta_{(\Gamma,1)}$ and $\theta_{(\Gamma,0)}$ into \eqref{ea18}, Lemma \ref{l3} can be proofed. This completes the proof.
\end{appendices}
\bibliography{ref}
\bibliographystyle{IEEEtran}
\end{document}